\documentclass[acmsmall]{acmart}
\usepackage{subfigure}
\usepackage{graphicx}
\usepackage{xcolor}
\usepackage{tabularx} 
\usepackage{booktabs}
\usepackage{multirow}
\usepackage{hyperref}
\usepackage{algorithm}
\usepackage{algorithmicx}
\usepackage[noend]{algpseudocode}
\usepackage{amsmath}
\algrenewcommand{\algorithmiccomment}[1]{\hskip0.5em$\triangleright$ \textcolor{darkpurple}{\footnotesize \textit{#1}}}
\algnewcommand\algorithmicforeach{\textbf{for each}}
\algdef{S}[FOR]{ForEach}[1]{\algorithmicforeach\ #1\ \algorithmicdo}

\definecolor{MR}{HTML}{000000} 
\definecolor{R1}{HTML}{000000} 
\definecolor{R2}{HTML}{000000} 
\definecolor{R3}{HTML}{000000} 
\definecolor{R4}{HTML}{000000} 
\definecolor{new}{HTML}{FF0000} 

\usepackage{tikz}

\usepackage{amsthm}
\usepackage{amsopn}
\usepackage{empheq}
\usepackage{mathtools}

\usepackage{color}
\usepackage{graphicx}
\usepackage{listings}
\usepackage{rotating}
\usepackage{tikz}
\usepackage{xspace}
\usepackage{colortbl}
\usepackage{graphicx}
\usepackage{balance}
\usepackage{url}

\usepackage{breakurl}
\usepackage{paralist}
\usepackage[misc]{ifsym}
\usepackage[T1]{fontenc}

\usepackage{stmaryrd}

\usetikzlibrary{decorations,arrows,shapes,shadows}
\usepackage{environ}
\usepackage{etoolbox}
\usepackage{cleveref}

\usepackage{lipsum}
\usepackage{enumitem}
\setlist[itemize]{leftmargin=*}

\usepackage{todonotes}

\Crefname{exam}{Example}{Ex.}
\Crefname{figure}{Figure}{Fig.}
\Crefname{section}{Section}{Sec.}
\Crefname{defi}{Definition}{Def.}
\Crefname{theo}{Theorem}{Thm.}

\newcommand{\mypara}[1]{\noindent\textbf{#1.}}

\newcommand{\query}{Q}

\newcommand{\rel}[1]{\ensuremath{\mathtt{#1}}}

\usepackage[yyyymmdd,hhmmss]{datetime}

\newcommand{\glo}{\ensuremath{\mathcal{G}}}
\newcommand{\loc}{\ensuremath{\mathcal{L}}}

\newcommand{\prov}{\ensuremath{\mathcal{P}}}

\newcommand{\syst}{\textsc{ProvQML}}
\newcommand{\raSQL}{RaSQL}
\newcommand{\mlquery}{Q_{ML}}
\newcommand{\bgdlr}{Q_{bgdlr}}
\newcommand{\bgdlrProv}{Q'_{bgdlr}}

\newcommand{\trainv}{\ensuremath{\rel{Trainv}}}
\newcommand{\trainRel}{\mathcal{T}}

\newcommand{\modelr}{\ensuremath{\rel{Model}}}
\newcommand{\modelprov}{\ensuremath{\rel{ModelProv}}}
\newcommand{\modelRel}{\mathcal{M}}

\newcommand{\predRel}{\ensuremath{\rel{Predict}}}

\newcommand{\gradRel}{\ensuremath{\rel{Gradient}}}
\newcommand{\recurRel}{R}
\newcommand{\recurView}{\rel{VIEW1}}
\newcommand{\localProv}{\prov_{\loc}}
\newcommand{\globalProv}{\prov_{\glo}}

\newcommand{\predInst}{Pred_{i}}
\newcommand{\gradInst}{Grad_{i}}
\newcommand{\grad}{Grad}
\newcommand{\predFunc}{f}
\newcommand{\gradFunc}{g}
\newcommand{\modelFunc}{m}
\newcommand{\mutual}{\mathcal{MT}}
\newcommand{\nonlinear}{\mathcal{NL}}

\definecolor{DarkGreen}{rgb}{0,0.45,0}
\definecolor{DarkRed}{rgb}{0.8,0,0}
\definecolor{DarkYellow}{rgb}{0.6,0.6,0}
\definecolor{DarkGray}{rgb}{0.2,0.2,0.2}
\definecolor{lightgray}{gray}{0.95}

\newcommand{\aggregation}{\gamma}

\newcommand{\join}{\bowtie}

\usepackage{multirow}

\algrenewcommand\algorithmicindent{0.8em}

\algrenewcommand{\algorithmiccomment}[1]{\hskip0.5em$\triangleright$ \textcolor{darkpurple}{\footnotesize \textit{#1}}}

\newtheorem{Theorem}{Theorem}
\newtheorem{Definition}{Definition}

\newtheorem{Example}{Example}

\newenvironment{proof}{\textsc{Proof.}}{\hfill$\square$}

\definecolor{black}{rgb}{0,0,0}
\definecolor{grey}{rgb}{0.8,0.8,0.8}
\definecolor{red}{rgb}{1,0,0}
\definecolor{green}{rgb}{0,1,0}
\definecolor{darkgreen}{rgb}{0,0.5,0}
\definecolor{darkpurple}{rgb}{0.5,0,0.5}
\definecolor{darkdarkpurple}{rgb}{0.3,0,0.3}
\definecolor{blue}{rgb}{0,0,1}
\definecolor{shadegreen}{rgb}{0.95,1,0.95}
\definecolor{shadeblue}{rgb}{0.95,0.95,1}
\definecolor{shadered}{rgb}{1,0.85,0.85}
\definecolor{shadegrey}{rgb}{0.85,0.85,0.85}
\definecolor{oddRowGrey}{rgb}{0.80,0.80,0.80}
\definecolor{evenRowGrey}{rgb}{0.85,0.85,0.85}
\definecolor{wonColor}{rgb}{0.63,0.79,0.95}
\definecolor{lostColor}{rgb}{1.0,0.55,0.35}

\lstdefinestyle{psql}
{
tabsize=2,
basicstyle=\small\upshape\ttfamily,
language=SQL,
morekeywords={PROVENANCE,BASERELATION,PRECEDING,CURRENT,ROW,INFLUENCE,PARTITION,OVER,COPY,ON,TRANSPROV,TRANSSQL,TRANSXML,CONTRIBUTION,COMPLETE,TRANSITIVE,NONTRANSITIVE,EXPLAIN,SQLTEXT,GRAPH,IS,ANNOT,THIS,XSLT,MAPPROV,cxpath,OF,TRANSACTION,SERIALIZABLE,COMMITTED,INSERT,INTO,WITH,SCN,UPDATED, RECURSIVE},
extendedchars=false,
keywordstyle=\bfseries,
mathescape=true,
escapechar=@,
sensitive=true
}

\definecolor{lstpurple}{rgb}{0.5,0,0.5}
\definecolor{lstred}{rgb}{1,0,0}
\definecolor{lstreddark}{rgb}{0.7,0,0}
\definecolor{lstredl}{rgb}{0.64,0.08,0.08}
\definecolor{lstmildblue}{rgb}{0.66,0.72,0.78}
\definecolor{lstblue}{rgb}{0,0,1}
\definecolor{lstmildgreen}{rgb}{0.42,0.53,0.39}
\definecolor{lstgreen}{rgb}{0,0.5,0}
\definecolor{lstorangedark}{rgb}{0.6,0.3,0}
\definecolor{lstorange}{rgb}{0.75,0.52,0.005}
\definecolor{lstorangelight}{rgb}{0.89,0.81,0.67}
\definecolor{lstbeige}{rgb}{0.90,0.86,0.45}

\lstdefinestyle{psqlcolor}
{
tabsize=2,
basicstyle=\small\upshape\ttfamily,
language=SQL,
morekeywords={PROVENANCE,BASERELATION,UNBOUNDED,FOLLOWING,PRECEDING,CURRENT,ROW,PARTITION,OVER,INFLUENCE,COPY,ON,TRANSPROV,TRANSSQL,TRANSXML,CONTRIBUTION,COMPLETE,TRANSITIVE,NONTRANSITIVE,EXPLAIN,SQLTEXT,GRAPH,IS,ANNOT,THIS,XSLT,MAPPROV,cxpath,OF,TRANSACTION,SERIALIZABLE,COMMITTED,INSERT,INTO, WITH, SCN,UPDATED, UNNEST, ASCII, COALESCE, RECURSIVE, as},
extendedchars=false,
keywordstyle=\bfseries\color{lstpurple},
deletekeywords={count,min,max,avg,sum},
keywords=[2]{count,min,max,avg,sum,array_agg,row_number,first_value,last_value, cube, case, when, ascii, unnest, cast, regr, regr_r2, coalesce, last, exp},
keywordstyle=[2]\color{lstblue},
stringstyle=\color{lstreddark},
commentstyle=\color{lstgreen},
mathescape=true,
escapechar=@,
sensitive=true
}

\newbool{Techreport}
\newrobustcmd{\ifnottechreport}[1]{\ifbool{Techreport}{}{#1}}
\newrobustcmd{\iftechreport}[1]{\ifbool{Techreport}{#1}{}}

\booltrue{Techreport}

\AtBeginDocument{%
  }

\acmConference
\acmBooktitle
\setcopyright{cc}
\setcctype{by}
\acmJournal{PACMMOD}

\begin{document}

\title{\syst: 
Efficient Provenance Capture for Machine Learning as Recursive Aggregate SQL [Extended Version]
}

\author{Nkechi J. Akinwale}
\affiliation{%
  \institution{University of Cincinnati}
  \country{USA}
}
\email{akwarinj@mail.uc.edu}

\author{Seokki Lee}
\affiliation{%
  \institution{University of Cincinnati}
  \country{USA}
}
\email{lee5sk@ucmail.uc.edu}


\begin{abstract}

Expressing ML algorithms as recursive aggregate SQL queries over distributed systems enables large-scale declarative data analytics, yet no existing system supports scalable provenance capture, necessary for ML explainability. 
We present $\syst$, a provenance-aware recursive aggregate SQL engine that efficiently captures fine-grained provenance while evaluating ML algorithms as recursive aggregate queries. 
We capture two types of provenance: (i) local provenance that records the contribution of each training instance to gradient computation at a
specific model state, and (ii) global provenance that encodes the sequence of model parameter updates across the iterations.
Extensive experiments on widely used ML datasets demonstrate that our approach efficiently captures both provenance for various ML models over widely used ML datasets.
\end{abstract}

\begin{CCSXML}
<ccs2012>
   <concept>
       <concept_id>10002951.10002952.10003197.10010822</concept_id>
       <concept_desc>Information systems~Relational database query languages</concept_desc>
       <concept_significance>500</concept_significance>
       </concept>
 </ccs2012>
\end{CCSXML}
\ccsdesc[500]{Information systems~Relational database query languages}

\keywords{Provenance, Declarative Machine Learning, Aggregate in Recursion}


\maketitle
\section{Introduction}
\label{sec:intro}

Expressing ML algorithms as recursive aggregate SQL queries over distributed systems such as Apache Spark has gained traction 
\cite{SX25,WW21}. For iterative ML tasks, aggregation within recursion is essential for fixpoint convergence~\cite{ZY17}. 
There is also an increasing need for ML explainability~\cite{LL17,KV20,S24}.
In the context of ML as a declarative query, explainability of ML aligns with the database community's long-standing work on explaining query results~\cite{BK01,MG10,AF-GProM18,LL18,LL20,DL21,AL23}. These studies rely on provenance, the record of how a query result is derived from input tuples, which is a well-studied method for such explanations. For ML queries,~\cite{TA23,AL25} envision that provenance can trace how training instances drive model parameter updates, supporting model validation and debugging. It also enables the possibility of computing feature importance~\cite{LL17,KV20}, which quantifies each feature's statistical contribution to the trained model. Despite this, no existing system supports scalable provenance capture for recursive aggregate SQL that is needed for ML explainability.

\begin{figure}[t]
\vspace{-1mm}
\small
\begin{tabular}{@{}c@{\hspace{-0.3cm}}c@{\hspace{0.9cm}}c@{}}
\parbox[c]{0.53\columnwidth}{
\centering
\setlength{\tabcolsep}{6.5pt} 
\textbf{Input training data $\trainRel$} \\
\begin{tabular}{|l|l|l|l|}
\hline
\cellcolor{black}\textcolor{white}{I} &
\cellcolor{black}\textcolor{white}{C} &
\cellcolor{black}\textcolor{white}{V} &
\cellcolor{black}\textcolor{white}{Y} \\
\hline
1   & Age & 46   & 1 \\ 
1   & CWork & 1    & 1 \\
1   & OCat & 1760 & 1 \\
1   & Education & 24   & 1 \\
1   & Sex & 1    & 1 \\
1   & Race & 1      & 1 \\
2   & Age & 45   & 0 \\
2   & CWork & 6    & 0 \\ [-2mm]
\vdots & \vdots & \vdots & \vdots \\
\hline
\end{tabular}
}
&
\parbox[c]{0.4\columnwidth}{%
\centering
\setlength{\tabcolsep}{7pt}
\textbf{Optimized weights} \\
\begin{tabular}{|l|l|l|}
\hline
\cellcolor{black}\textcolor{white}{C} &
\cellcolor{black}\textcolor{white}{P} &
\cellcolor{black}\textcolor{white}{J} \\
\hline
Age &  15.4500 & 100 \\
CWork & -0.4633 & 100 \\
OCat & -24.1800 & 100 \\
Education & 11.5067 & 100 \\
Sex & 0.4367 & 100 \\
Race & -0.4067 & 100 \\
\hline
\end{tabular}%
}
\end{tabular}
\vspace{-4mm}
\caption{ACSIncome data $\trainRel$~\cite{ACSincomedataset21} (verticalized and pre-processed), and trained model parameters (results) over 100 iterations (\texttt{J} = 100).
}
\label{fig:running-example}
\end{figure}


\begin{figure*}[t]
    \centering \vspace{-3mm}
    \includegraphics[width=1.04\textwidth, trim=0 55 0 0, clip]{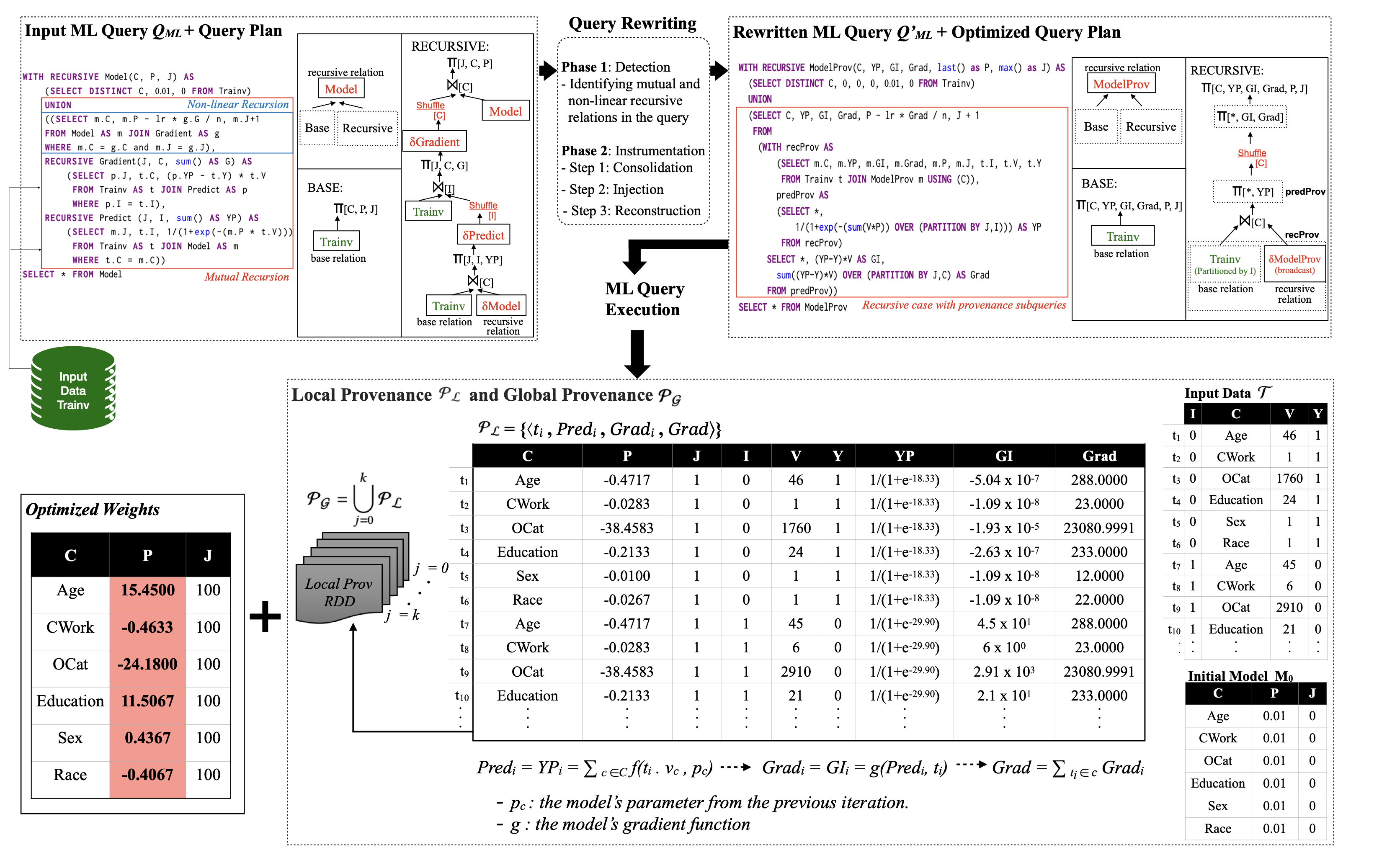}
    \vspace{-7mm}
    \caption{Overview of $\syst$ query evaluation for the running example}
    \label{fig:overview-of-provqml-query-execution}
\end{figure*}

\begin{Example}[Motivating Example]
\label{ex:running-example}
\Cref{fig:running-example} shows a sample of the ACS Income dataset,\footnote{\url{https://fairlearn.org/main/user_guide/datasets/acs_income}} which records demographic and employment attributes, including class of work (\texttt{CWork}) and occupation category (\texttt{OCat}) along with a binary income label (\texttt{Y} = 1 if annual income exceeds \$50{,}000, 0 otherwise), which is reinterpreted as a loan approval outcome. 
The input data $\trainRel$ is stored in a verticalized format where each feature and its value of an instance appear as a tuple.
We will revisit the verticalization in~\Cref{sec:prelim}. 
Tom, an analyst at a bank, uses logistic regression model to determine the loan approval outcomes, and the optimized model's weights are shown in the figure.
His manager raises a concern that loan approvals from the first half the year 
disproportionately favor a particular racial group. 
The model weights alone are insufficient to diagnose the cause, e.g., Race has one of the lowest weights, so Tom needs to trace which training instances drove the weight updates at each iteration.
\end{Example}

In this paper, we present $\syst$, a provenance-aware recursive aggregate SQL engine built on top of Apache Spark that instruments input queries expressing ML algorithms, e.g.,~\cite{WW21}, to efficiently capture fine-grained provenance while evaluating a broad range of such algorithms. 
Unlike~\cite{WW21}, which requires mutual and non-linear recursion with multiple aggregates, $\syst$ uses standard linear recursion with window functions, requiring no new recursion constructs, that can be evaluated in standard (distributed) database systems with a minimal extension. 
The correctness of $\syst$'s query rewriting and optimization relies on two properties established by prior work~\cite{ZY17,GW19,WW21}. The Pre-Mappability (PreM) property~\cite{ZY17,GW19} gives conditions under which extrema aggregates (\texttt{min}, \texttt{max}) can be pushed into recursion while producing the same result as the equivalent aggregate-stratified program. For non-extrema aggregates (\texttt{sum}, \texttt{avg}), the Pre-Countable Cardinality (PCC) property~\cite{WW21} guarantees correctness when the delta relation's cardinality is constant across iterations and pre-computable which is a condition ML workloads such as batch gradient descent satisfy naturally. \Cref{fig:overview-of-provqml-query-execution} gives an overview of $\syst$'s query evaluation pipeline.

\mypara{Provenance capture via query rewriting}
The top of \Cref{fig:overview-of-provqml-query-execution} shows $\syst$ instruments the input ML query $\mlquery$ expressing Tom's logistic regression model in SQL (\Cref{ex:running-example}) into a rewritten query $\mlquery'$ computing the optimized model's weights and their provenance.
We will explain the details of $\mlquery$ and $\mlquery'$ in~\Cref{sec:prelim} and~\Cref{sec:provqml}, respectively.
The formulation of ML queries based on prior work~\cite{WW21} involves mutual and non-linear recursion (highlighted on $\mlquery$), which makes direct provenance instrumentation prohibitively expensive. It increases the number of recursions as provenance capture for aggregate query requires to join with its input relations. The \syst's instrumentation consists of two phases: (i) detecting mutually recursive and non-linearly recursive relations on $\mlquery$, then (ii) collapsing them into a single linear recursive relation whose recursive case expresses intermediate computations as a sequential chain of provenance subqueries using window functions. PreM ensures that aggregates in the rewritten recursive case can be pushed into the recursion without altering the fixpoint, and PCC ensures $\mlquery'$ converges to the same result as $\mlquery$. 
The rewritten query captures two types of provenance, as shown in the bottom half of \Cref{fig:overview-of-provqml-query-execution}. Local provenance $\localProv$ records each training instance's contribution to gradient computation at a specific model state
\textcolor{R3}{by capturing the per-instance prediction $Pred_i$, the per-instance gradient contribution $Grad_i$, and the aggregated gradient $Grad$.} 
\textcolor{R3}{
For example, for the first customer 
Age = 46 annotated with $t_1$, $\syst$ computes provenance for the per-instance prediction $Pred_1 = YP_1 = 1/(1 + \exp(-(sum(t_1.V \cdot p_c, t_2.V \cdot p_c, \cdots, t_6.V \cdot p_c)))$ using the model parameter $p_c$ from the previous iteration.
At $j$ = 1, $p_c$ refers to the initialized model parameter $0.01$, which gives $Pred_1 = YP_1$ = $1/(1 + \exp(-(sum(46,1,1760,24,1,1) \cdot 0.01)))$ = $1/(1+e^{-18.33})$. 
It further computes the corresponding per-instance gradient contribution $Grad_1 = GI_1 = (YP_1 - t_1.Y) \cdot t_1.V$ = $-5.04 \times 10^{-7}$, which contributes to the aggregated gradient value of $288 = Grad_1 + Grad_7 + \cdots$. 
Thus, at $j$ = 1, the updated model parameter $P$ for Age becomes -0.4717.} 
%
%
%
Global provenance $\globalProv$ preserves the sequence of model parameter updates across the fixpoint iterations by accumulating $\localProv$ at each iteration $j = 0, \ldots, k$ as a set of $\localProv$ in RDDs.
The captured provenance enables Tom to reconstruct the full optimization trajectory and trace which training instances drove the weight of, e.g., 
\textcolor{R3}{\texttt{Age} = $15.4500$},
after 100 iterations.

\mypara{Optimization for Provenance Capture}
Capturing local and global provenance introduces two challenges, both of which are reflected in the optimized query plan shown in \Cref{fig:overview-of-provqml-query-execution}. First, the window functions required for update of model's parameters involve sorting and shuffling across worker nodes at every iteration. $\syst$ addresses this by broadcasting the small delta model relation $\delta\modelRel$ to all workers, eliminating the join shuffle entirely, and by partitioning the input training data $\trainRel$ by instance identifier \texttt{I} once upfront. Since the prediction window function partitions on \texttt{I}, this upfront partitioning eliminates its reshuffle; only a single shuffle per iteration remains for the gradient window function, which partitions on \texttt{C}. Second, accumulating global provenance over $k$ iterations incurs $O(|\trainRel| \cdot k)$ memory growth. $\syst$ joins only with $\delta\modelRel_{j-1}$ at each iteration based on the observation that tuples from earlier iterations yield no valid join results due to mismatched iteration keys. Furthermore, it evicts per-iteration RDDs once the accumulated global provenance subsumes the provenance from the previous iterations, keeping the memory usage bounded at $O(|\trainRel|)$ per iteration.

\mypara{Usability of Provenance}
\ifnottechreport{
Provenance-aware recursive training pipelines expose the dependency structure of iterative optimization. By materializing intermediate values such as per-instance predictions ($\predInst$) and per-tuple gradient contributions ($\gradInst$) at each iteration $j$, the system exposes the lineage linking training instances to parameter updates. This enables analysts like Tom to identify influential records and feature-level sensitivity~\cite{TA23,AL25} without retraining the model.
}
\iftechreport{
Provenance-aware recursive training pipelines provide a principled mechanism for exposing the dependency structure of iterative optimization. By materializing intermediate values such as per-instance predictions ($\predInst$) and per-tuple gradient contributions ($\gradInst$) at each iteration $j$, the system makes explicit the lineage linking individual training instances to parameter updates. Instead of observing only the final weight vector, analysts like Tom, can reconstruct how specific instances contribute to the aggregated gradient and how these contributions propagate across recursive state transitions. This fine-grained provenance enables systematic identification of influential records and feature-level sensitivity as indicated in~\cite{TA23,AL25} without re-executing the recursive query entirely (retraining the model). 
}
\iftechreport{
In regulated settings, the resulting provenance can constitute a formally defined mapping from input training instances to final model parameters, supporting auditability of the training process.}

\iftechreport{
Beyond interpretability, embedding provenance within the recursive relation enables computational reuse and adaptive optimization. By persisting iteration indices ($j$), gradient aggregates, and their contributing instances, the system supports warm-start execution, allowing training to resume from a previously materialized iteration state without recomputing earlier steps. Furthermore, the availability of cross-iteration lineage enables convergence diagnostics based on the evolution of gradient magnitudes and directions. Such diagnostics can guide dynamic adjustment of hyperparameters (e.g., learning rate or sampling ratio) within the same declarative framework. By integrating provenance into the recursive query dataflow, the system reduces reliance on external monitoring while enabling selective retraining and incremental refinement grounded in explicit dependency analysis.
}

\mypara{Contributions}
We make the following contributions:
\begin{itemize}
\item We introduce the $\syst$ query language, which extends recursive aggregate SQL with support for subqueries and window functions within the recursive case, enabling provenance capture for a broad range of ML algorithms 
(\Cref{sec:provqml}).

\item We devise query rewriting techniques that 
capture local and global provenance during query evaluation (model training), with correctness guaranteed by PreM and PCC (\Cref{sec:query-rewriting-semantics}).
\item We develop $\syst$, a recursive aggregate SQL engine on top of Apache Spark that executes the rewritten queries with optimizations that reduce shuffle overhead and bound memory growth during provenance capture (\Cref{sec:optimization}).
%
\item We evaluate $\syst$ using 5 ML algorithms and 4 real-world datasets, demonstrating efficient provenance capture 
(\Cref{sec:experiments}).
\end{itemize}

The rest of the paper is organized as follows:~\Cref{sec:prelim} covers basics of recursive aggregate queries including PreM and PCC properties. \Cref{sec:problem-definition} formalizes the problem, and \Cref{sec:provqml} introduces $\syst$ query language. \Cref{sec:query-rewriting-semantics} explains provenance computation and~\Cref{sec:optimization} discusses optimizations. We present experiments in~\Cref{sec:experiments}, related work in~\Cref{sec:relatedwork}, and conclusion in~\Cref{sec:conclusion-future-work}.


\section{Preliminary}
\label{sec:prelim}

\subsection{Expressing ML Algorithm as Queries}
\label{sec:formal-sem}
We revisit the formal semantics~\cite{WW21} that establish how ML models can be expressed declaratively as recursive aggregate queries over data represented in verticalized format. As data volume and feature dimensionality continue to grow, training datasets have become large, high-dimensional, and increasingly sparse. Verticalized data formats provide an efficient way to store such data, e.g., the popular LIBSVM benchmark ~\cite{libsvm}.
\textcolor{R3}{To the best of our knowledge, verticalization is the standard encoding adopted by prior work on declarative ML expressed as recursive queries ~\cite{WW21}, and it is the encoding that makes the ML query expressible as a single, dimension-independent recursive aggregate query.} 
Formally, each training instance $X = \langle I,Y,x_1, \cdots, x_n \rangle$ is transformed into tuples of the form $\langle I,Y,l,v \rangle$, where $l \in [1,...,n]$ denotes the feature dimensions and $v$ is the corresponding feature value. 
Features with value 0 are omitted. 
We use $\trainv(I,C,V,Y)$ to represent a verticalized training data where
$I$ is the identifier of a training instance $inst_i$,
$C$ denotes a feature of $inst_i$, 
$V$ is the value of $C$ for $inst_i$,
and $Y$ denotes the label (target) for $C$ in $inst_i$.


The recursive aggregate queries introduced by~\cite{WW21} over $\trainv$ exhibit two types of recursion. \emph{Mutual recursion} occurs when two or more recursive relations depend on each other, forming a cyclic dependency among relations. 
\emph{Non-linear recursion} occurs when a single recursive step contains multiple recursive relations. 
Both mutual and non-linear recursion introduce additional processing overhead compared to linear recursion, which contains exactly one recursive relation, due to their more complex evaluation paths. 
\cite{WW21} investigated mitigation of the computational overhead of these complex recursions.
\Cref{tab:queryconfig} shows various ML algorithms that differ in prediction, loss, and gradient functions. Note that the general ML query structure remains the same.

\begin{figure}[b]
\hrule
\begin{minipage}{0.95\linewidth}
\centering
\begin{lstlisting}[numbers=left, stepnumber=1, numbersep=5pt, columns=fixed, basewidth=0.6em]
WITH RECURSIVE $\modelr$(C, P, J) AS
    (SELECT DISTINCT C, $0.01$, $0$ FROM $\trainv$)
    UNION
    ((SELECT m.C, m.P - lr * g.G / n, m.J+1
    FROM $\modelr$ AS m JOIN $\gradRel$ AS g
    WHERE m.C = g.C and m.J = g.J),
    RECURSIVE $\gradRel$(J, C, sum() AS G) AS
        (SELECT p.J, t.C, (p.YP - t.Y) * t.V
         FROM $\trainv$ AS t JOIN $\predRel$ AS p 
         WHERE p.I = t.I),
    RECURSIVE $\predRel$ (J, I, sum() AS YP) AS
        (SELECT m.J, t.I, 1/(1+exp(-(m.P * t.V)))        
         FROM $\trainv$ AS t JOIN $\modelr$ AS m
         WHERE t.C = m.C))
SELECT * FROM $\modelr$
\end{lstlisting}
\end{minipage}
\hrule
\vspace{-3mm}
\caption{An example query encoding BGD for LR}
\label{fig:formalsemanticsquery}
\vspace{-3mm}
\end{figure}

\begin{Example}[Example ML query]
\label{ex:ml-q}
~\Cref{fig:formalsemanticsquery} shows an example SQL query that expresses batch gradient descent (BGD) for logistic regression (LR), equivalent to the Datalog version in~\cite{WW21}. 
The query defines three mutually recursive relations: $\modelr$ (lines 1--6), $\gradRel$ (lines 7--10), and $\predRel$ (lines 11--14). The recursive relation \lstinline{$\modelr$(C, P, J)} consists of three columns. \lstinline{J} is the iteration counter, \lstinline{C} denotes a feature dimension, and \lstinline{P} is the model weight for each feature \lstinline{C}. The base case (line 2) initializes the model parameters by selecting distinct features \lstinline{C} from the verticalized training relation $\trainv$ with an initial weight of $0.01$ and iteration counter $0$.
The recursive case (lines 4--13) updates the model weight as \lstinline{m.P - lr * g.G / n}, where \lstinline{lr} is the learning rate and \lstinline{n} is the number of training instances, by joining $\modelr$ with $\gradRel$ on feature dimension \lstinline{C} and iteration \lstinline{J}. $\predRel$ (lines 11--14) joins $\modelr$ with $\trainv$ on \lstinline{C} and computes the per-instance predicted value \lstinline{YP} as sum of \lstinline{1/(1+exp(-(m.P * t.V)))} for each training instance \lstinline{I}. $\gradRel$ (lines 7--10) joins $\predRel$ with $\trainv$ on instance identifier \lstinline{I} and computes the aggregated gradient \lstinline{G} as sum of \lstinline{(p.YP - t.Y) * t.V} for each feature \lstinline{C}. The three relations are evaluated mutually recursively until convergence or \lstinline{J} reaches the maximum iteration, and the final trained model's parameters are returned by the outer \lstinline{SELECT} (line 15).
\end{Example}
%

\setlength{\tabcolsep}{0.2pt} 
\begin{table}[t]
\caption{Query Configuration for Used ML Algorithms}
\vspace{-4mm}
\label{tab:queryconfig}
\centering
\resizebox{\textwidth}{!}{%
\begin{tabular}{p{1cm} p{2.8cm} p{2.7cm} l l l}
\toprule
Query  & ML Algorithm & Predict function $f$ & Loss function $\mathcal{L}$ &
Gradient $g = \nabla_{\theta} \mathcal{L}$ & Regularizer $\lambda$ \\
\midrule
$\query_1$ & Linear regression
& $YP = V \cdot P$
& $(YP - Y)^2$
& $2\cdot (YP - Y)\cdot V$
& N/A \\
$\query_2$ & Lasso regression
& $YP = V \cdot P$
& $(YP - Y)^2$
& $2\cdot (YP - Y)\cdot V$
& $\lambda \cdot \mathrm{sign}(P)$ \\

$\query_3$ & Ridge regression
& $YP = V \cdot P$
& $(YP - Y)^2$
& $2\cdot (YP - Y)\cdot V$
& $\lambda \cdot P$ \\
$\query_4$ & Logistic regression
& $YP = \dfrac{1}{1 + e^{-V \cdot P}}$
& $
\begin{cases}
- \log(YP), \text{if } Y = 1 \\
- \log(1 - YP), \text{if } Y = 0
\end{cases}
$
& $(YP - Y)\cdot V$
& N/A \\
$\query_5$ & SVM
& $YP = V \cdot P$
& $\max(0, 1 - Y \cdot Y_P)$
& $\begin{cases}
- Y \cdot V, \text{if } Y \cdot YP < 1 \\
0,            \text{otherwise}
\end{cases}$
& $\lambda \cdot P$ \\
\bottomrule
\end{tabular}
}
\footnotesize {*Note: $YP$ is the predicted output from the model, $P$ is the weight and $V$ is the feature value. $Y$ is the true label. $\lambda$ is a hyperparameter that controls the regularizer weight.
}
\end{table}
\subsection{PreM and PCC Properties}
\label{sec:prem-pcc}

\mypara{PreM} 
The Pre-Mappability (PreM) property~\cite{ZY17,GW19} provides formal conditions under which extrema aggregates, i.e., min and max, can be pushed into recursion while preserving the semantics of the equivalent aggregate-stratified program. Formally, given a relational algebra expression $T(R_1, \ldots, R_m)$ where $R_i$ is the vector of relations, and a constraint $\gamma$, $\gamma$ is PreM to $T$ if $$\gamma(T(R_1, \ldots, R_m)) = \gamma(T(\gamma(R_1), \ldots, \gamma(R_m)))$$ When the PreM property holds, the semi-naive fixpoint computation of the unstratified program produces the same results as the equivalent aggregate-stratified program, providing both a formal declarative semantics and a more efficient evaluation by eliminating the need to defer the aggregate to a post-recursion stratum. However, PreM applies only to extrema aggregates and is insufficient for aggregates such as sum, count, and avg. 

\mypara{PCC} 
To address the limitation of PreM for non-extrema aggregates, the Pre-Countable Cardinality (PCC) property~\cite{WW21} provides formal semantics for the use of sum, count, and avg in recursion. The PCC property holds when the cardinality of the delta relation $\delta R_i$ at each iteration $i$ of the semi-naive fixpoint computation is nonzero, remains constant across all iterations, and can be pre-computed before the fixpoint computation begins. Under this condition, non-monotonic aggregates such as sum can be computed using only monotonic constructs, since the incremental computation of continuous count and sum is monotonic. Consequently, a recursive program satisfying PCC is equivalent to an aggregate-stratified program where the cardinality is pre-computed at a lower stratum, providing its formal declarative semantics. This condition is naturally satisfied by ML workloads such as gradient descent, where the number of training instances processed per iteration is fixed and known ahead of time.

\subsection{Distributed Semi-Naive Evaluation}
\label{sec:dsn}
Distributed semi-naive evaluation (DSN) executes recursive queries using the semi-naive strategy in a distributed environment, with data partitioned across workers and intermediate results redistributed via shuffle operations ~\cite{WW21}.
Semi-naive evaluation relies on delta-based execution: at iteration \lstinline{j}, only newly derived tuples $\delta$ are used to compute results for iteration \lstinline{j+1}, and the evaluation continues until a fixpoint is reached. This significantly reduces the size of intermediate results. 
%
\ifnottechreport{
Distributed semi-naive evaluation proceeds as a fixpoint loop over Map and Reduce stages. In the Map stage, the join between the base relation $B$ and the delta relation $\delta R$ on the specified partition key $K$ is performed locally on each mapper, generating intermediate results allocated to reducers. In the Reduce stage, each reducer computes the new tuples $D$ for the current iteration as the set difference $\delta R' - R$, updates the recursive relation $R$ for the next iteration via union, and emits $D$.
At each iteration $j$, the Map stage produces intermediate results, which are redistributed across reducers via a shuffle exchange on key $K$. The Reduce stage then computes the new delta $\delta R$. The loop terminates when no new tuples are derived, i.e., $\delta R = \emptyset$, at which point the fully computed recursive relation $R$ is returned.
\syst\ leverages DSN for its query evaluation. The full pseudocode is provided in the extended version~\cite{ProvQMLExtended}.
}
\iftechreport{
Algorithm~\ref{alg:dsn} describes the DSN evaluation procedure. It defines two auxiliary functions specifying the Map (lines 3--5) and Reduce (lines 6--10) stages. In the Map stage, the join between the base relation $B$ and the delta relation $\delta R$ on the specified partition key $K$ is performed locally on each mapper, generating intermediate results allocated to reducers (lines 4--5). The function $F$ denotes any additional selection or projection operations required by the 
recursive program. In the Reduce stage, each reducer computes the new tuples $D$ for the current iteration as the set difference $\delta R' - R$ (line 8), updates the recursive relation $R$ for the next iteration via union (line 9), and emits $D$ (line 10).
The main fixpoint loop begins at line 11, where $\delta R$ is initialized to the results of the base case and $R$ is set to $\emptyset$. At each iteration $j$ (line 13), the Map stage produces intermediate results (line 14), which are redistributed across reducers via a shuffle exchange on key $K$ (line 15). The Reduce stage then computes the new delta $\delta R$ (line 16). The loop terminates when no new tuples are derived, i.e., $\delta R = \emptyset$ (line 17), at which point the fully computed recursive relation $R$ is returned (line 18).
\syst\ leverages DSN for its query evaluation.
}


\iftechreport{
\begin{algorithm}[t]
\caption{Distributed Semi-Naive Evaluation}
\label{alg:dsn}
\begin{algorithmic}[1]
\State Input: Base relation $B$, Partition key $K$
\State Output $R$: Recursive relation, $\delta R$: Recursive relation (Delta)

    \State \textbf{MapStage} ($\delta R$, $B$)
    \ForEach{partition pair of ($\delta R$, $B$)}
    \State \textbf{emit} $F(\delta R \bowtie_{(\delta R.K = B.K)} B)$
    \EndFor
    \State \textbf{ReduceStage} ($\delta R'$, $R$)
    \ForEach{partition pair of ($\delta R'$, $R$)}
        \State $D \leftarrow \delta R' - R$
        \State $R \leftarrow \delta R' \cup R$
        \State \textbf{emit} $D$
    \EndFor
    \State $\delta R \leftarrow \text{Results of base case}$, $R \leftarrow \emptyset$

    \Repeat
        \State \qquad $j \leftarrow j + 1$
        \State \qquad $mapOutput \leftarrow \text{MapStage}(\delta R, B)$
        \State \qquad $\delta R' \leftarrow \text{ShuffleExchange}(mapOutput, \text{key} = K)$
        \State  \qquad $\delta R \leftarrow \text{ReduceStage}(\delta R', R)$
    \Until{$\delta R == \emptyset$}
    \State \Return $R$
\end{algorithmic}
\end{algorithm}
}

\iftechreport{
\subsection{Gradient Descent (GD) in ML} 
Training a model requires finding parameters that minimize the prediction error over the training dataset. Let D = ${(X_i, y_i)}_{i=1}^{n}$ denote a training dataset with $n$ tuples, where each tuple consists of a d-dimensional feature vector $X_i$, and a corresponding target value $y_i$. For classification tasks, the target variable $y_i \in \{-1, +1\}$ (e.g., queries $Q_q$ where $q \in \{4,5\}$ in ~\Cref{tab:queryconfig}); for regression tasks, $y_i \in \mathbb{R}$ (e.g., queries $Q_q$ where $q \in \{1,2,3\}$).
The process of training a model can be formalized as follows. Given a function $f (\theta; X)$ that makes predictions, the goal is to find a set of parameters $\theta$ (i.e., model weights $w$) that minimizes the prediction error, as measured by a loss function $\mathcal{L}$: $\theta^* = \arg\min\nolimits_{\theta} \mathcal{L}\big(f(\theta; X), Y\big)$.
%
$\mathcal{L}$ is typically evaluated using gradient descent (GD), a first-order, gradient-based optimizer widely adopted in ML. Gradient represents the rate of change of the loss function $\mathcal{L}$ with respect to the model parameters $\theta$, while descent imply moving in the direction that reduces $\mathcal{L}$ toward its minimum at each iteration $i$. In this paper, we use batch gradient descent (BGD), a variant of GD widely used in relational engines. BGD computes precise gradients over the entire training dataset, updating $\theta$ at each iteration. As shown in ~\Cref{equ:parameterupdate}, BDG starts from an initial model parameter $\theta_0$, with each update step controlled by the learning rate $\eta$. The iterative process continues until a convergence or a termination condition $j$ is met. 
\begin{equation}
\label{equ:parameterupdate}
\theta_{i+1} = \theta_i - \eta \left[ \sum_{(X,y) \in D} \nabla_{\theta} \mathcal{L}\big(f(\theta_i \cdot X), y\big) + \lambda \theta_i \right]
\end{equation}
%
\noindent where $\nabla_{\theta} \mathcal{L}$ is the gradient of the loss function and $\lambda$ denotes the regularization. ~\Cref{tab:queryconfig} captures the GD algorithms executed in the experiments section of this paper.

}



\section{Problem Definition}
\label{sec:problem-definition}

We consider an ML query that encodes an ML algorithm over 
a verticalized training data $\trainRel$. 
A recursive relation $\modelRel$ (such as \modelr\ in~\Cref{fig:formalsemanticsquery}) maintains the parameters with one tuple per feature dimension. 
The goal is to capture provenance that traces how training instances influence model parameter updates through per-instance predictions and gradient computations.

\subsection{ML Query}
An ML query $\mlquery$ is a recursive aggregate query characterized by two distinct types of recursions that interact through aggregations (see~\Cref{sec:formal-sem}). 
$\mlquery$ operates on input training data $\trainRel$ and the initial model parameters in $\modelRel$ defined using $\trainRel$, e.g., base case in~\Cref{fig:formalsemanticsquery}.
$\modelRel_j$ denotes the model state at iteration $j$.

\begin{Definition}[ML Query]
\label{def:ml-query}
An ML query $\mlquery$ over an input data $\trainRel$ is formally expressed as:
\begin{align*}
\predRel \leftarrow \aggregation \left(\modelRel_{j-1} \join \trainRel \right) \\
\gradRel \leftarrow \aggregation \left(\predRel \join \trainRel \right) \\
\mlquery(\trainRel) = \modelr \leftarrow \text{lfp} \left(\modelRel_j \oplus \gradRel \right)
\end{align*}
where $\oplus$ is the parameter update operator, $\aggregation$ represents aggregations within the inner recursive queries, $\text{lfp}$ is the least fixed point function.
\end{Definition}

The outer recursion computes the fixpoint of model weights, ensuring that each parameter update is derived strictly from the model parameters $p$ of the previous state $\modelRel_{j-1}$ to produce the updated $p$ for the current state $\modelRel_j$. 
Within each outer iteration, the query invokes inner recursive relations, $\predRel$ and $\gradRel$. 
$\predRel$ computes per-instance predictions by aggregating $p$ for each instance, while $\gradRel$ evaluates error signals by aggregating residuals for each feature.

\subsection{Provenance for ML}
In the context of ML, provenance is not merely 
source tuples but a functional trace of how training data 
and model states 
interact through aggregations, enabling explainability and debugging for iterative learning. 
We use $\prov$ to denote provenance and distinguish two types of provenance.

\mypara{Local Provenance}
\emph{Local provenance} $\localProv$ captures the dependency between a training instance $t \in \trainRel$ and its contribution to gradient computation at a specific model state $\modelRel_j$. 

\begin{Definition}[Local Provenance]
\label{def:local-prov}
For each training instance $t_i \in \trainRel$ with feature value $t_i.v_c$, we define $\localProv$ as a tuple $t_{L} = \langle t_i, \predInst, \gradInst, \grad \rangle$ where 
\begin{itemize}
    \item $\predInst = \sum_{c \in C} \predFunc(t_i.v_c, p_c)$ represents the per-instance prediction where $\predFunc$ is the model's prediction function, $C$ is the whole set of features, and $p_c$ is the model's parameter from the previous iteration (e.g., $1/(1+exp(-(m.P * t.V))$ in~\Cref{fig:formalsemanticsquery})
    \item $\gradInst = \gradFunc(\predInst, t_i)$ represents the per-instance gradient contribution where $\gradFunc$ is the model's gradient function (e.g., $(p.YP - t.Y) \cdot t.V$) 
    \item $\grad = \sum_{t_i \in c} \gradInst$ is the aggregated gradient over all instances sharing the same feature category $c$.
\end{itemize}
Formally, $\localProv$ is defined as follows:
\begin{align*}
\localProv(\trainRel,\modelRel) = \{ \langle t_i, \mathcal{A}(t_i,p_c) \rangle \mid t_i \in \trainRel \wedge p_c \in \modelRel \}
\end{align*} 
where $\mathcal{A}$ is the augmentation function that computes $\predInst$, $\gradInst$, and $\grad$. 
\end{Definition}

This provenance explains how each tuple in $\trainRel$ influenced the gradient at a given iteration and enables contribution analysis by quantifying how specific subsets of data adjust the model's weight.

\mypara{Global Provenance}
Global provenance $\globalProv$ preserves the evolution of the model parameters across the sequence of fixpoint iterations. 
In the context of a fixpoint iteration $\recurRel_{i+1} \leftarrow \recurRel_i \cup \Delta \recurRel$, where $\recurRel$ is the recursive relation (e.g., $\modelr$), $\globalProv$ maintains the versioning of $\localProv$ indexed by the iteration counter J.

\begin{Definition}[Global Provenance]
\label{def:global-prov}
The global provenance $\globalProv$ is the union of historical states of $\localProv$ such that
$$
\globalProv = \bigcup_{j=0}^{k} \localProv
$$
where $k$ is the convergence threshold or maximum iteration. 
\end{Definition}

$\globalProv$ ensures that for any model parameter $p$ at iteration $j$, there exists a total mapping back to the specific instance-level gradient $\gradInst$ from the previous iteration $j-1$, which allows for the reconstruction of the optimization path.

We now formally define the problem addressed in this paper:
\begin{center}
\fcolorbox{gray}{lightgray}{
\parbox{0.95\linewidth}{
\begin{itemize}[leftmargin=*,nosep]
    \item \textbf{Input:} An ML query $\mlquery$ and input data $\trainRel$
    \item \textbf{Output:} Optimized model weights $p$ while maintaining local and global provenance $\localProv$ and $\globalProv$, respectively
\end{itemize}
}}
\end{center}

Our approach instruments $\mlquery$ into $\mlquery'$, which computes the optimized model parameters $p$ over $\trainRel$ and $\modelRel$ while capturing both types of provenance on-the-fly.
Although this instrumentation reduces query complexity by eliminating mutual and non-linear recursions,
capturing $\globalProv$ remains computationally expensive because maintaining $\localProv$ throughout the entire iteration can significantly increase intermediate result size.
We explain how our proposed framework $\syst$ generates the instrumented query $\mlquery'$ in~\Cref{sec:query-rewriting-semantics} and describe how it mitigates the computational overhead in~\Cref{sec:optimization}.


\subsection{Computational Challenges}
In evaluating $\mlquery$, each iteration computes a new model state by aggregating contributions from all training instances. 
A parameter update at iteration $j$ depends on all training instances in that iteration and the model state from iteration $j-1$, which recursively depends on all prior iterations. 
Thus, the provenance of any final model parameter implicitly includes a transitive closure over all training instances across all iterations. 
Aggregates like \lstinline{sum} collapse values across tuples into a scalar, obscuring individual instance contributions, while recursive computation discards intermediate parameter states unless explicitly materialized. 
Thus, computing fine-grained provenance incurs substantial overhead, with space complexity $O(|\trainRel| \cdot k)$ where $k$ is the iteration depth, bounded by the product of input data size and iteration count.

\ifnottechreport{
\textcolor{MR}{
A naive instrumentation of $\mlquery$ would require an additional join for each recursive relation. For example, consider the query in~\Cref{fig:formalsemanticsquery}. After \texttt{Predict} computes the per-instance predicted value \lstinline{YP} via \lstinline{sum()}, the relation must join back with $\trainv$ and $\modelr$ to compute $\predInst$ collapsed by the aggregate. Similarly, after \texttt{Gradient} computes the aggregated gradient \lstinline{G} for each feature \lstinline{C}, the relation must join back with the relation encoding $\predInst$ to recover the per-instance gradient contributions $\gradInst$. These additional joins must be evaluated within the mutually recursive structure of $\mlquery$, which already requires multiple shuffle operations per iteration, compounding the overhead further and making the provenance computation unmanageable. A full example of such a naive instrumentation is provided in the extended version~\cite{ProvQMLExtended}.
}
}

\iftechreport{
A naive instrumentation of $\mlquery$ would require an additional join for each recursive relation. Consider the query in~\Cref{fig:formalsemanticsquery}. After \texttt{Predict} computes the per-instance predicted value \lstinline{YP} via \lstinline{sum()}, a provenance relation \texttt{PredProv} must join \texttt{Predict} back with $\trainRel$ and $\modelRel$ on instance identifier \lstinline{I} and feature dimension \lstinline{C} to compute the $\predInst$, collapsed by the aggregate. Similarly, after \texttt{Gradient} computes the aggregated gradient \lstinline{G} for each feature \lstinline{C}, a provenance relation \texttt{GradProv} must join back with \texttt{PredProv} on \lstinline{(C, J)} to capture the per-instance gradient contributions $\gradInst$ and the aggregated $\grad$ per feature. These additional joins must be evaluated within the mutually recursive structure of $\mlquery$, which already requires multiple shuffle operations per iteration. Introducing \texttt{PredProv} and \texttt{GradProv} as further recursive relations compounds this overhead: each of the $k$ iterations incurs additional shuffle operations for both relations, and the full recursive relations --- not just their deltas --- must be retained in memory to support the non-linear recursion, making the computational cost unmanageable. The complete naive instrumentation of $Q_{bgdlr}$ is shown in~\Cref{fig:naive-instrumentation} 
which captures an example rewritten query that can be  
generated by naively instrumenting $\bgdlr$ from \Cref{fig:formalsemanticsquery}.
}
%

\begin{figure}[t]
\centering \hspace{2mm}
\begin{minipage}{0.95\linewidth}
\hrule
\centering
\begin{lstlisting}[numbers=left, stepnumber=1, numbersep=5pt, columns=fixed, basewidth=0.6em]
WITH RECURSIVE Model (C, P, J) AS
  (SELECT DISTINCT 0, C, $0.01$ FROM Trainv)
  UNION
  ((SELECT m.C, m.P - lr * g.G / n, m.J+1
    FROM Model AS m JOIN Gradient AS g
    WHERE m.C = g.C AND m.J = g.J),
    RECURSIVE GradProv (C, P, J, I, V, Y, YP, GI, G) AS
      (SELECT pp.C, pp.P, pp.J, pp.I, pp.V, pp.Y, 
          pp.YP, (pp.YP - pp.Y) * pp.V AS GradI, g.G
       FROM PredProv AS pp JOIN Gradient AS g
          ON pp.C = g.C AND pp.J = g.J),
    RECURSIVE Gradient (J, C, sum() AS G) AS
      (SELECT p.J, t.C, (p.YP - t.Y) * t.V
       FROM $\trainv$ AS t JOIN Predict AS p),
    RECURSIVE PredProv (C, P, J, I, V, Y, YP) AS
      (SELECT t.C, m.P, p.J, t.I, t.V, t.Y, p.YP
       FROM Predict AS p JOIN $\trainv$ AS t ON p.I = t.I
          JOIN Model AS m ON t.C = m.C AND p.J = m.J),
    RECURSIVE Predict (J, I, sum() AS YP) AS
      (SELECT m.J, t.I, 1/(1+exp(-(m.P * t.V))
       FROM $\trainv$ AS t JOIN Model AS m
       WHERE t.C = m.C))
SELECT * FROM Model
\end{lstlisting}
\hrule
\vspace{-3mm}
\caption{An example rewritten query, generated by naively instrumenting $\bgdlr$ from \Cref{fig:formalsemanticsquery}}
\label{fig:naive-instrumentation}
\end{minipage}
\end{figure}




\section{The $\syst$ Language}
\label{sec:provqml}



$\syst$ query language extends \raSQL~\cite{GW19}, a recursive aggregate query engine that runs on Apache Spark, which itself extends standard SQL, especially Common Table Expressions (CTEs) to support the use of basic aggregates (i.e., min, max, sum, count) in recursion. 
The \raSQL\ system addresses the limitation of Spark SQL's compiler, i.e., its lack of native support for aggregates in recursive queries, through a two-step compilation process: first recognizing recursive table references in the query to construct a recursive plan, then executing the plan iteratively via a fixpoint operator until convergence or the maximum iteration is reached~\cite{GW19}. 
\raSQL\ focuses solely on recursive aggregate queries for graph problems such as reachability and shortest paths, supporting neither ML tasks nor provenance computation. 
$\syst$ further extends in three key ways that are necessary to efficiently capture provenance for ML queries: support (i) window functions to capture provenance during recursive query evaluation, (ii) subqueries within the recursive query, and (iii) for user-defined functions (UDFs), such as \lstinline{last()}, for the recursive query. 
%
%

While a previous work~\cite{WW21} addresses the ML limitation by demonstrating how ML workloads can be expressed as recursive aggregate queries in Datalog, but it does not support provenance computation. Furthermore, the query construct described therein (e.g., its SQL version in~\Cref{fig:formalsemanticsquery}) involves complex mutual and non-linear recursions with joins over multiple recursive relations, which would incur prohibitively high computational costs for provenance capture. $\syst$ addresses these limitations through the aforementioned extensions. 
\setlength{\textfloatsep}{4pt}
\begin{figure}[t]
\hrule
\begin{minipage}{1\linewidth}
\centering
\begin{lstlisting}[columns=fixed, basewidth=0.6em, lineskip=-1pt]
WITH RECURSIVE VIEW1 (C, Col$_{\predInst}$, Col$_{\gradInst}$, Col$_{\grad}$, Agg$_P$(), Agg$_J$()) AS
  (SQL-expression-base)
  UNION
  (SELECT C, Col$_{\predInst}$, Col$_{\gradInst}$, Col$_{\grad}$, $\modelFunc$(P,lr,Grad,n), J + 1
   FROM (WITH VIEW2 AS 
            (SQL-expression-recProv),
            (SQL-expression-predProv)
        (SQL-expression-gradProv)))
SELECT * FROM VIEW1
\end{lstlisting}
\end{minipage}
\hrule
\vspace{-3mm}
\caption{$\syst$ query syntax} 
\label{fig:recqmlquery}
\end{figure}

\mypara{Syntax} 
\Cref{fig:recqmlquery} shows the $\syst$ query syntax which begins with the \lstinline{WITH RECURSIVE} construct, followed by the recursive view \recurView\ and its schema definition. 
The schema of \recurView\ consists of three groups of columns. First, \lstinline{C} identifies the feature dimension. Second, \lstinline{Col$_{\predInst}$}, \lstinline{Col$_{\gradInst}$}, and \lstinline{Col$_{\grad}$} carry the per-instance provenance --- the predicted value $\predInst$, the per-instance gradient contribution $\gradInst$, and the aggregated gradient $\grad$ (\Cref{def:local-prov}) --- computed from the subqueries within the recursive case and propagated through \recurView\ at each iteration $j$. Third, the aggregate functions \lstinline{Agg$_P$()} and \lstinline{Agg$_J$()} maintain the latest model weight \lstinline{P} and track the current iteration counter \lstinline{J}, respectively.
$\syst$ allows the use of a UDF to retain the most recent model weight at each iteration, eliminating the need for joins over multiple recursive relations to capture provenance.
While the model's weights are typically expected to decrease as optimization progresses, their numeric values are not guaranteed to be monotonic, so supporting a UDF is necessary.
The content of \recurView\ is defined by a \emph{union} of two sub-queries: a base case (\lstinline{SQL-expression-base}), whose \lstinline{FROM} clause does not refer to any recursive relation and initializes the model, and a recursive case, whose \lstinline{FROM} clause refers to a recursive relation. 
The recursive case computes the updated model weight through the model update function \lstinline{$\modelFunc$(P,lr,Grad,n)}, where \lstinline{P} is the current weight, \lstinline{lr} is the learning rate, \lstinline{Grad} is the computed gradient, and \lstinline{n} is the number of training instances, and increments the iteration counter \lstinline{J+1}, with the recursion terminating when \lstinline{J} is reached.
%
The recursive case is expressed as a sub-query containing a sequence of view definitions via a nested \lstinline{WITH} clause. \lstinline{SQL-expression-recProv} joins \recurView\ with the input training relation to pair each training tuple with its corresponding model weight for the current iteration, capturing the provenance of the model weights with respect to the training data.
\lstinline{SQL-expression-predProv} refers to \lstinline{SQL-expression-recProv} and uses a window function to compute the predicted values along with their provenance. Similarly, \lstinline{SQL-expression-gradProv} refers to \lstinline{SQL-expression-predProv} and uses a window function to compute the gradients along with their provenance. 
This sequential view structure is a design choice in $\syst$, which avoids the complex mutual and non-linear recursions with multiple joins over recursive relations. 
This design significantly reduces the computational cost of provenance capture during recursive query evaluation. Furthermore, $\syst$ adopts an implicit group-by rule for the outer recursive relation \recurView\ 
i.e., no explicit \lstinline{GROUP BY} clause is required, as all non-aggregate columns are automatically treated as grouping columns.


\begin{figure}[t]
\hrule
\begin{minipage}{0.95\linewidth}
\centering
\begin{lstlisting}[numbers=left, stepnumber=1, numbersep=5pt, columns=fixed, basewidth=0.6em] 
WITH RECURSIVE ModelProv(C, YP, GI, Grad, last() as P, max() as J) AS  
  (SELECT DISTINCT C, $0$, $0$, $0$, $0.01$, $0$ FROM $\trainv$)
  UNION
  (SELECT C, YP, GI, Grad, P - lr * Grad / n, J + 1
   FROM 
    (WITH recProv AS 
        (SELECT m.C, m.YP, m.GI, m.Grad, m.P, m.J, t.I, t.V, t.Y
         FROM $\trainv$ t JOIN ModelProv m USING (C)),
        predProv AS 
        (SELECT *, 
            1/(1+exp(-(sum(V*P)) OVER (PARTITION BY J,I))) AS YP
         FROM recProv)
      SELECT *, (YP-Y)*V AS GI, 
        sum((YP-Y)*V) OVER (PARTITION BY J,C) AS Grad         
      FROM predProv))
SELECT * FROM ModelProv
\end{lstlisting}
\end{minipage}
\hrule
\vspace{-3mm}
\caption{Example $\syst$ Query $\bgdlrProv$ capturing provenance for Batch Gradient Descent (BGD) for Logistic Regression (LR)}
\label{fig:recqmlquery-bgdlr}
\end{figure}

\mypara{Example $\syst$ Query} 
Consider the example query $\bgdlrProv$ shown in~\Cref{fig:recqmlquery-bgdlr}, which captures provenance for Batch Gradient Descent (BGD) for Logistic Regression (LR). 
The query begins by defining the recursive view \modelprov\ 
whose schema includes the UDF \lstinline{last()} to retain the most recent model weight at each iteration, and the monotonic aggregate function \lstinline{max()} to track the current iteration counter.
Three additional columns \lstinline{YP}, \lstinline{GI}, and \lstinline{Grad} are introduced to carry the provenance computed through the recursive case: \lstinline{YP} stores the per-instance predicted value $\predInst$, \lstinline{GI} stores the per-instance gradient contribution $\gradInst$, and \lstinline{Grad} stores the aggregated gradient $\grad$ per feature, as defined in~\Cref{def:local-prov}.
The base case (line 2) initializes the model parameters by selecting distinct dimensions \lstinline{C} from the verticalized training relation $\trainv$
(see~\Cref{sec:formal-sem}), with an initial weight of \lstinline{0.01}, iteration counter \lstinline{0}, and initialized  values of 0 for \lstinline{YP}, \lstinline{GI} and \lstinline{Grad}.
The recursive case (lines 4--16) computes the updated model weight as \lstinline{P - lr * Grad / n}, increments the iteration counter \lstinline{J+1}, and involves a sequence of sub-queries. \texttt{recProv} (lines 6--8) joins \modelprov\ 
with $\trainv$ on column \lstinline{C}, forming the provenance of the model weights with respect to the training data. \texttt{predProv} (lines 9--12) computes 
$\predInst$ (\Cref{def:local-prov}) 
using a window function \lstinline{1/(1+exp(-(sum(V*P)))} partitioned by \lstinline{J} and \lstinline{I}, to evaluate the logistic regression prediction function and captures the relationship between each training instance and its corresponding prediction, maintaining the result in \lstinline{YP}. The final \lstinline{SELECT} in the recursive case (lines 13--15) computes 
$\gradInst$ as \lstinline{(YP-Y)*V} stored in \lstinline{GI} (\Cref{def:local-prov}) and 
$\grad$ using a window function \lstinline{sum((YP-Y)*V)} partitioned by \lstinline{J} and \lstinline{C}, capturing the provenance of how $\predInst$ contributes to $\grad$, preserving in \lstinline{Grad}. 
\mypara{Challenges of the Extension}
These extensions are non-trivial within \raSQL's fixpoint framework. Window functions in the recursive case require the query compiler to correctly scope them to the intra-step relation \texttt{recProv} rather than the accumulated recursive relation, re-executing the window evaluation fresh at each fixpoint step. Their semantic correctness within recursion also requires explicit justification via the PCC property (\Cref{sec:prem-pcc}), since standard SQL semantics for window functions do not account for fixpoint convergence. The nested 
subqueries require \raSQL's plan recognition pass must correctly classify each table reference in the recursive case as either an inter-step recursive dependency (e.g., \texttt{ModelProv}) or an intra-step non-recursive view (e.g., \texttt{recProv}, \texttt{predProv}, \texttt{gradProv}) resolved fresh at each iteration. Without this distinction, the compiler could incorrectly flag inner view references as recursive or fail to resolve them entirely, neither of which \raSQL's original pass handles. Finally, the UDF \lstinline{last()} is necessary because model weights are non-monotonic and no standard aggregate captures the most recently computed value. Its correctness relies on the iteration counter \texttt{J} acting as a total order over fixpoint steps, ensuring the fixpoint operator selects the highest \texttt{J} tuple per feature dimension at convergence.


\section{Query Rewriting Semantics}
\label{sec:query-rewriting-semantics}

We first explain how $\syst$ instruments an input ML query $\mlquery$ into a provenance capture query $\mlquery'$ that computes local provenance $\localProv$ and global provenance $\globalProv$. 
In short, the instrumentation proceeds in two phases.
\begin{itemize}[leftmargin=*]
    \item \textbf{Detection} (Section~\ref{sec:detection}): 
    Given $\mlquery$, $\syst$ identifies the mutually recursive relations and non-linearly recursive relations in the query that must be instrumented. 
    \item \textbf{Rewriting} (Section~\ref{sec:rewriting}): 
    Based on $\mlquery$ and the detection result, $\syst$ rewrites each identified recursive relation into a semantically equivalent form using subqueries and window functions, and combines the rewritten relations into $\mlquery'$ that captures $\localProv$ and $\globalProv$ during recursive evaluation.
\end{itemize}

\noindent The semantic correctness of $\mlquery'$ with respect to $\mlquery$ is proved in~\Cref{sec:seman-corr}.

\subsection{Detecting Recursive Relations}
\label{sec:detection}
Given an ML query $\mlquery$ with recursive relations $\mathcal{R} = \{R_1, R_2, \ldots, R_m\}$, where the primary recursive relation $R_1$ is defined by a base case $B_1$ and a recursive case $S_1$, and each auxiliary recursive relation $R_i$ ($i > 1$) is defined solely by a recursive case $S_i$ (i.e., the child of the \texttt{UNION} operator that references $\mathcal{R}$), mutual recursion and non-linear recursion can be identified by each $S_i$: (i) $R_i$ is mutually recursive if $S_i$ references any other $R_j \in \mathcal{R}$ ($j \neq i$), and (ii) non-linearly recursive if $S_i$ contains more than one recursive relations such as $R_j$ in addition to $R_i$.
While a dependency graph over $\mathcal{R}$ combined with strongly connected component detection~\cite{tarjan1972,aho1983} would support a more general detection procedure applicable to arbitrary recursive queries~\cite{abiteboul1995,ullman1988}, the fixed syntactic structure of $\mlquery$ (Definition~\ref{def:ml-query}) makes such generality unnecessary as a single-pass scan over each $S_i$ suffices. Detecting both types of recursions is necessary because they determine how provenance is propagated in the subsequent rewriting step (Section~\ref{sec:rewriting}). Note that the detection algorithm can be extended to support more complex ML algorithms, such as neural networks, by generalizing the inspection to account for additional base cases per recursive relation in $\mlquery$ and deeper recursive dependencies which we leave as future work.

\begin{algorithm}[t]
\caption{\textsc{DetectRecursiveRelations}}
\label{alg:detect}
\begin{algorithmic}[1]
\Require ML query $\mlquery$
\Ensure Mutual recursion $\mutual$,
        non-linear recursion $\nonlinear$

\State 
$\mutual \leftarrow \emptyset$,\;
$\nonlinear \leftarrow \emptyset$,\;
$\mathit{refs} \leftarrow \emptyset$

\State
$\mathcal{R} \leftarrow \textsc{ExtractRecursiveRel}(\mlquery)$,\;
    \Comment{Extract recursive relations}

\For{each $R_i \in \mathcal{R}$}
    \State $S_i \leftarrow \textsc{ExtractRecursiveCase}(R_i)$ \Comment{Extract recursive case of $R_i$}
    \State $\mathit{refs} \leftarrow \{R_j \in \mathcal{R} \mid R_j \in S_i,\; j \neq i\}$ \Comment{Scan $S_i$ for cross-references}
    \If{$\mathit{refs} \neq \emptyset$}
        \State $\mutual \leftarrow \mutual \cup \{R_i\}
               \cup \mathit{refs}$
               \Comment{Record mutually recursive relations}
        \If{$R_i \in S_i$}
            \State $\nonlinear \leftarrow \nonlinear \cup \{R_i\}$
               \Comment{More than one recursive relations}
        \EndIf
    \EndIf
\EndFor

\State \Return $(\mutual,\, \nonlinear)$
\end{algorithmic}
\end{algorithm}

Algorithm~\ref{alg:detect} shows the detection procedure. Given $\mlquery$ as input, it first extracts the set of recursive relations $\mathcal{R}$ from the \lstinline{RECURSIVE} blocks (line 2). It then iterates over each $R_i \in \mathcal{R}$ (line 3), extracting its recursive case $S_i$ (line 4) and scanning $S_i$ for cross-references to other relations in $\mathcal{R}$ (line 5), i.e., any $R_j \in \mathcal{R}$ with $j \neq i$. If at least one such cross-reference is found (\texttt{refs} $\neq \emptyset$), all involved relations are recorded as mutually recursive (lines 6--7). If $R_i$ also appears in its own recursive case $S_i$ (line 8), it is additionally recorded as non-linearly recursive (line 9), since $S_i$ then contains more than one recursive relation. The procedure returns the pair $(\mathcal{MT}, \mathcal{NL})$ upon completing the scan (line 10).
\begin{Example}
For the example query encoding BGD for LR in Figure~\ref{fig:formalsemanticsquery}, the procedure yields a single mutually recursive group $\mutual = \{$\modelr$, \texttt{Gradient}, \texttt{Predict}\}$ and identifies $\nonlinear = \{ $\modelr$ \}$ as a non-linear recursive relation, since its recursive case includes $\modelr$, joining with \texttt{Gradient}.    
\end{Example}
The procedure requires a single pass over each recursive case $S_i$, giving an overall complexity of $\mathcal{O}(|\mathcal{R}| \cdot |S|)$ where $|\mathcal{R}|$ is the number of recursive relations in $\mlquery$, and $|S|$ is the number of a recursive case in $R_i \in \mathcal{R}$. In practice, both are small for ML queries, 
so the procedure adds negligible overhead.

\subsection{Rewriting for Provenance Capture}
\label{sec:rewriting}

Given $\mlquery$ and the detection result $(\mutual, \nonlinear)$ produced by Algorithm~\ref{alg:detect}, the rewriting instruments $\mlquery$ into a provenance capture query $\mlquery'$. The key idea is to consolidate the mutually recursive relations in $\mutual$ into a single linear recursive relation and to make the intermediate computations of $\localProv$ explicit as subqueries with window functions within the recursive case. Thus, per-instance predictions $\predInst$ and gradient contributions $\gradInst$ along with their provenance are materialized at each iteration $j$. 
Rather than evaluating intermediate results (i.e., $\predInst$, $\gradInst$, and $\grad$ in~\Cref{def:local-prov}) across multiple mutually recursive relations, they are computed as a sequential chain of subqueries within the recursive case of the single consolidated relation. 
For example, as shown in Figure~\ref{fig:recqmlquery-bgdlr} (lines 6--8), \lstinline{recProv} joins $\trainRel$ with the current model state $\modelRel_j$, providing the instance-level lineage required for $\localProv$, while \lstinline{predictProv} and the gradient computation apply window functions over \lstinline{recProv} to 
compute $\predInst$, $\gradInst$, and $\grad$ without any further joins. This ensures that provenance is captured through a single join per iteration, avoiding the additional join overhead that would arise from instrumenting the original mutually recursive structure in $\mlquery$. 
%
Since all ML algorithms considered in this paper (\Cref{tab:queryconfig}) share the same recursive structure of $\mlquery$ (\Cref{def:ml-query}), with a recursive relation for prediction and another recursive relation for gradient computation,~\Cref{alg:rewrite} is sufficient to instrument all of them uniformly.

Recall that the semantic justification for this transformation relies on the PreM and PCC properties (Section~\ref{sec:prem-pcc}). PreM ensures that the aggregates in the window functions in the rewritten recursive case can be pushed into the recursion while preserving the semantics of the equivalent aggregate-stratified program. PCC ensures that $\mlquery'$ converges to the same fixpoint as $\mlquery$, since the cardinality of the delta relation at each iteration remains constant and can be pre-computed, guaranteeing that the provenance subqueries do not alter the computed model parameters. Thus, PreM and PCC guarantee that the transformation is semantics-preserving with respect to both the output relation and $\localProv$.

\begin{algorithm}[t]
\caption{\textsc{RewriteForProvenance}}
\label{alg:rewrite}
\begin{algorithmic}[1]
\Require $\mlquery$,
         mutual recursion $\mutual$,
         non-linear recursion $\nonlinear$
\Ensure  $\mlquery'$

\State $\mlquery' \leftarrow \mlquery$

\If{$\mutual \neq \emptyset$}
    \Comment{Step 1: Relation consolidation}
    \State $R_{\mathit{head}} \leftarrow \textsc{SelectHead}(\mutual)$
    \State $\textsc{AugmentSchema}(R_{\mathit{head}}
           )$ 
    \Comment{\textit{Add feature column $C$, provenance columns (e.g., $Col_{\predInst}, \mathit{Col}_{\gradInst}, \mathit{Col}_{\grad}$), and aggregates $\mathit{Agg}_P()$ and $\mathit{Agg}_J()$ to $R_{head}$}}
    \For{each $R_i \in \mutual \setminus \{R_{\mathit{head}}\}$}
        \State $\textsc{ConvertToSubquery}(R_i, R_{\mathit{head}}, \mlquery')$
        \Comment{Move $R_i$ into the recursive case of $R_{\mathit{head}}$ in $\mlquery'$}
    \EndFor
\EndIf
\If{$\nonlinear \neq \emptyset$}
    \State $R_{\mathit{pred}} \leftarrow \textsc{GetPredictRelation}(\mutual)$ \Comment{Step 2: Identify Predict}
    \State $\mathit{recProv} \leftarrow \textsc{BuildRecProv}(R_{\mathit{pred}})$ \Comment{Rewrite Predict  as \texttt{recProv}}
    \State $\mathit{predictProv} \leftarrow
               \textsc{BuildPredProv}(\mathit{recProv})$
           \Comment{Evolve \texttt{recProv} into \texttt{predictProv} computing $\predInst$}
    \State $R_{\mathit{grad}} \leftarrow \textsc{GetGradientRelation}(\mutual)$
           \Comment{Step 3: Identify Gradient}
    \State $\mathit{Grad} \leftarrow
           \textsc{BuildGrad}(R_{\mathit{grad}}, \mathit{predictProv})$
           \Comment{Build gradient subquery over \texttt{predictProv} computing $\gradInst$ and $\grad$}
    \State $\textsc{ReplaceJoinClause}(R_{\mathit{head}}, 
        \{\mathit{recProv}, \mathit{predictProv}, \mathit{Grad}\}, \mlquery')$
           \Comment{Replace join clause of $R_{\mathit{head}}$ with the subquery chain}
\EndIf
\State \Return $\mlquery'$
\end{algorithmic}
\end{algorithm}

Algorithm~\ref{alg:rewrite} formalizes the rewriting in three steps.

\begin{itemize}
\item{\textbf{Step 1: Relation consolidation (lines 2--6).}}
When $\mutual \neq \emptyset$, the mutually recursive relations are collapsed into a single recursive relation. The head relation $R_{\mathit{head}}$ is designated as the primary recursive relation in $\mlquery'$, and it is defined with a feature column C, columns for maintaining provenance (e.g., Col$_{\predInst}$, Col$_{\gradInst}$, and Col$_{\grad}$), in addition to
\lstinline{Agg$_P$()} and \lstinline{Agg$_J$()} which carry converged values across fixpoint iterations. For example, in Figure~\ref{fig:recqmlquery-bgdlr}, 
\lstinline{YP}, \lstinline{GI}, and \lstinline{Grad} maintains provenance computed through the recursive case, and \lstinline{last()} retains the most recently computed model weight at each iteration $j$, 
while \lstinline{max()} tracks the current iteration counter, reflecting \syst's vertical aggregate semantics in which aggregate functions are applied column-wise over the verticalized relation $\trainRel$ to derive a single converged value per feature at each iteration. The remaining relations in $\mutual$ are converted to non-recursive subqueries within the recursive case of $R_{\mathit{head}}$ via \textsc{ConvertToSubquery}.

\item{\textbf{Step 2: Provenance subquery injection (lines 7--10).}}
When $\nonlinear \neq \emptyset$, the \texttt{Predict} relation is retrieved from $\mutual$ and processed in two stages. First, \textsc{BuildRecProv} rewrites it as a subquery \texttt{recProv}, preserving all tuple-level attributes required for $\localProv$. Second, \textsc{BuildPredProv} evolves \texttt{recProv} into a second subquery \texttt{predictProv} by applying partitioned window functions to compute the per-instance prediction $\predInst$, replacing the original inline aggregate.

\item{\textbf{Step 3: Gradient reconstruction (lines 11--13).}}
The \texttt{Gradient} relation is retrieved from $\mutual$ and rewritten as a subquery $\mathit{Grad}$ via \textsc{BuildGrad}, defined over \texttt{predictProv} and computing $\gradInst$ and the aggregated $\grad$ using partitioned window functions. This eliminates the recursive dependency on $\gradRel$ and completes the capture of $\localProv$ within a single recursive case. Finally, \textsc{ReplaceJoinClause} substitutes the subquery chain \{$\mathit{recProv}$, $\mathit{predictProv}$, $\mathit{Grad}$\} into the join clause of $R_{\mathit{head}}$, replacing the original mutually recursive join structure.
\end{itemize}

\begin{figure}[t]
    \centering \vspace{-2mm}
    \includegraphics[width=0.95\linewidth, trim= 0 0 0 15]{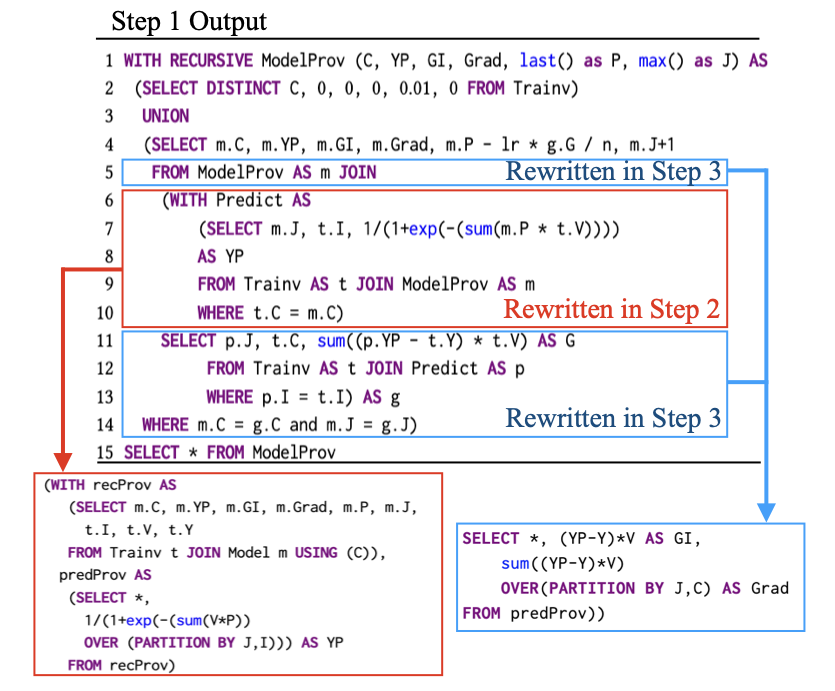}
     \vspace{-6mm}
    \caption{
    \textcolor{MR}{
    Example of intermediate rewritten query for each step of~\Cref{alg:rewrite}, transforming $\bgdlr$ 
    into $\bgdlrProv$ 
    }}
    \label{fig:step-by-step-rewritten-query}
\end{figure}


\begin{Example}
\label{ex:rewrite}
Applying Algorithm~\ref{alg:rewrite} to $\bgdlr$ in~\Cref{fig:formalsemanticsquery} produces \color{MR}{the rewritten query \textcolor{black}{$\bgdlrProv$ in~\Cref{fig:recqmlquery-bgdlr},} through three steps.
The step-wise intermediate rewritten query is shown in~\Cref{fig:step-by-step-rewritten-query}.
In Step 1, \textcolor{black}{$\modelprov$ becomes the head relation (line 1 of Step 1 output), containing  with \lstinline{last() as P}, \lstinline{max() as J},} and provenance columns \lstinline{YP}, \lstinline{GI} and \lstinline{Grad}.
The remaining mutually recursive relations \texttt{Predict} and \texttt{Gradient} in $\mutual$ are converted to non-recursive relations and nested as a subquery within the recursive case of $\:\modelprov$. 
Step 2 rewrites \texttt{Predict} (lines 6--10 of the Step 1 output) into \texttt{recProv} and \texttt{predProv}, \textcolor{black}{creating a single linear join and materializing the per-instance prediction $\predInst$.}
Step 3 reconstructs \texttt{Gradient} (lines 11--13). The join clause (lines 5 and 14) is removed from the recursive case of \modelprov\ and folded into the subquery, yielding $\bgdlrProv$ (\Cref{fig:recqmlquery-bgdlr}), which captures $\localProv$ at each iteration $j$.
}
Since $\localProv$ is materialized at every iteration $j$ through the sequential subquery structure of $\bgdlrProv$, $\globalProv$ is accumulated as the union of these per-iteration snapshots across all fixpoint iterations, indexed by the iteration counter \lstinline{J}.
\end{Example}

\subsection{Semantic Correctness}
\label{sec:seman-corr}
We now state the correctness of the rewritten ML query for computing optimized model weights while maintaining provenance.

\begin{Theorem}
\label{thm:correctness} (Correctness.) 
The rewritten $\mlquery'$, instrumented for input ML query $\mlquery$, returns the optimized model weights defined by $\mlquery$ and the $\localProv$ and $\globalProv$.
\end{Theorem}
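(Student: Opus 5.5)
We prove \Cref{thm:correctness} by induction on the iteration counter $j$ of the fixpoint computation, showing that after iteration $j$ the rewritten relation $\modelprov$ (more generally $R_{head}$ produced by \Cref{alg:rewrite}) agrees with $\modelRel_j$ of $\mlquery$ on the columns $(C,P,J)$. Alongside, we show that the provenance columns materialized at that iteration are exactly $\localProv(\trainRel,\modelRel_{j-1})$ of \Cref{def:local-prov}.

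\textbf{Base case.} For $j=0$ the base cases coincide syntactically after projection. Both select the distinct features $C$ of $\trainRel$ with weight $0.01$ and counter $0$. The added provenance columns are constants and do not affect $(C,P,J)$.

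\textbf{Inductive step.} Assume the invariant at $j-1$, so that the tuples of $\modelprov$ carrying the maximal $J=j-1$ (selected per $C$ by $\max()$ and $\last()$, which is well defined because $J$ totally orders the steps) equal $\modelRel_{j-1}$. We then follow the three steps of \Cref{alg:rewrite} and establish three equalities.
\begin{enumerate}
\item \texttt{recProv} equals $\trainRel \join_C \modelRel_{j-1}$. This is the input of $\predRel$ in \Cref{def:ml-query}.
\item The window \texttt{sum(V*P) OVER (PARTITION BY J,I)} assigns to every tuple of instance $I$ the value that the grouped \texttt{sum()} of $\predRel$ assigns to group $I$. A window aggregate is a grouped aggregate joined back on its partition key. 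This yields $\predInst$, together with exactly the join $\predRel\join_I\trainRel$ that $\gradRel$ consumes.
\item Likewise, the window partitioned by $(J,C)$ equals the grouped \texttt{sum()} of $\gradRel$ replicated per tuple, giving $\grad$. The column \texttt{GI} is $\gradInst$.
\end{enumerate}
The update $m(P,lr,\grad,n)$ is then the same expression as $\oplus$ in $\mlquery$ applied to the same arguments, so $\modelRel_j$ is reproduced.

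\textbf{Aggregation inside recursion.} Here we invoke the properties of \Cref{sec:prem-pcc}. By PCC, the delta at every step has constant, precomputable cardinality: $|\trainRel|$ tuples per step for the window stages, and $|C|$ model tuples. Hence the \texttt{sum} aggregates evaluated within the recursion agree with their aggregate-stratified counterparts. PreM, applied to $\max()$ on $J$, justifies pushing the extrema aggregate into recursion, so semi-naive evaluation (\Cref{sec:dsn}) reaches the same least fixpoint as $\mlquery$. Termination at $J=k$ is identical in both queries, so the final weights coincide.

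\textbf{Provenance.} $\localProv$ at step $j$ is exactly the set of tuples $\langle t_i,\predInst,\gradInst,\grad\rangle$ computed in the inductive step, which is \Cref{def:local-prov}. Since each iteration's delta is tagged with $J$ and unioned into the recursive relation, the accumulated relation is $\bigcup_{j=0}^k \localProv$, which is $\globalProv$ by \Cref{def:global-prov}.

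\textbf{Main obstacle.} The hardest step is the interaction of window functions and the non-monotone $\last()$ with fixpoint semantics. We must argue that windows are scoped to the per-step delta (\texttt{recProv} built from $\delta\modelRel_{j-1}$ only), not to the accumulated relation; otherwise partitions by $(J,I)$ could mix iterations. We would handle this by showing that the join key $J$ isolates iterations: only tuples with $J=j-1$ join productively. Combined with PCC's constant cardinality, this makes every window partition identical to the corresponding group in the stratified evaluation.
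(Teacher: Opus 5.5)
Your induction tracks the paper's proof step for step: identical base case, \texttt{recProv} $=\trainRel\join_C\modelRel_{j-1}$, window-versus-group-by equality for $\predInst$ and $\grad$, identical update, then PreM/PCC for fixpoint-level validity. The step that fails is global provenance.

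You identify $\globalProv$ with the recursive relation itself (``each iteration's delta is tagged with $J$ and unioned into the recursive relation''). But the outer \texttt{SELECT} of the recursive case emits only \texttt{(C, YP, GI, Grad, P, J)}. It projects away \texttt{t.I}, \texttt{t.V}, \texttt{t.Y}, so no delta of \modelprov\ contains $t_i$, and none is a set of tuples $\langle t_i,\predInst,\gradInst,\grad\rangle$. In addition, the implicit group-by under \texttt{last()}/\texttt{max()} merges rows that agree on \texttt{(C, YP, GI, Grad)}, so the relation is not even a plain union.

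This also clashes with your own inductive step. For \texttt{recProv} to equal $\trainRel\join_C\modelRel_{j-1}$, the reference to \modelprov\ must expose one tuple per $C$ (your \texttt{max()}/\texttt{last()} selection). That is exactly what keeps this relation from carrying per-instance rows. The paper instead takes the union over the per-iteration outputs of the subquery chain, which do carry \texttt{I}, \texttt{V}, \texttt{Y}. These are accumulated outside the recursion as a chain of RDDs indexed by $J$, and a snapshot is evicted only after the accumulated union subsumes it (\Cref{sec:opt-global}). You need that mechanism, or an explicit provenance output of the subquery chain, to conclude $\globalProv=\bigcup_{j}\localProv$.

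Two smaller points. First, in your ``main obstacle'' plan, $J$ cannot isolate iterations through the join. \texttt{recProv} is \texttt{Trainv t JOIN ModelProv m USING (C)}, and \trainv\ has no $J$ column. Isolation has to come from semi-naive evaluation feeding only $\delta\modelprov$, or, as in the paper's proof, from the collapsed one-tuple-per-$C$ view under \texttt{max()}/\texttt{last()}; the $J$ in the window partition keys only keeps the windows themselves per-iteration.

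Second, your step~2 identity transfers only when the aggregated expressions coincide. For $\query_4$ as written, \predRel\ in \Cref{fig:formalsemanticsquery} sums $\sigma(P\cdot V)$ per feature, while $\bgdlrProv$ applies $\sigma$ to $\sum V\cdot P$. The equality therefore holds for the linear $f$ of \Cref{tab:queryconfig} but not literally for logistic regression. The paper's proof passes over this as well.
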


\ifnottechreport{
{\color{MR}
\begin{proof}
We provide proof sketch showing that $\mlquery'$ correctly computes the optimized weights and captures complete provenance.
The full proof is shown in ~\cite{ProvQMLExtended}.

\mypara{Correctness of model weights} 
By induction on $j$, the base case is immediate since both $\mlquery$ and $\mlquery'$ 
initialize $\modelRel$ identically in terms of the feature column C, weight P, and iteration counter J (line 2 in~\Cref{fig:formalsemanticsquery,fig:recqmlquery-bgdlr}),
%
so $\modelRel_0$ coincides trivially. For the inductive step, assume $\modelRel_{j-1}$ agrees between the two queries. \texttt{recProv} reconstructs the join $\modelRel_{j-1} \join_C \trainRel$ underlying $\mlquery$'s \texttt{Predict} relation, since $\modelRel$'s aggregate-annotated schema (\lstinline{Agg}$_P$(), \lstinline{Agg}$_J$()) ensures a tuple per feature $C$ represents the current model state at each iteration, so the join picks up the correct, up-to-date parameter value.
\texttt{predProv} and the final windowed \texttt{sum} reproduce $\mlquery$'s \texttt{Predict} and \texttt{Gradient} aggregates exactly, since a window aggregate over a partition computes the same value as the corresponding group-by aggregate, and the weight update itself uses the identical arithmetic expression in both queries. That this per-step equivalence is preserved across the full recursion, i.e., that the rewritten linear recursion converges to the same fixpoint as the original mutual/non-linear one, follows from PreM (safe for the extrema aggregates consolidating $\modelprov$) and PCC (safe for the sum aggregates for \texttt{predProv} and \texttt{Grad}, since the delta relation's cardinality is constant and pre-computable for batch-style ML workloads). Thus, $\modelRel_j$ agrees for all $j \leq k$, and the final weights returned by $\mlquery'$ match $\mlquery$.

\mypara{Provenance completeness} 
The provenance subqueries are non-recursive over $\modelRel$ and $\trainRel$, introducing no spurious tuples, and the join on $C$ with $\modelRel$'s one tuple per feature ensures every $t_i \in \mathcal{T}$ is represented exactly once. Thus, the materialized $\langle \predInst, \gradInst, \grad \rangle$ at each $j$ matches $\localProv$ (Def.~\ref{def:local-prov}) exactly. Since this holds at every iteration and the accumulation of per-iteration snapshots is simply their union, the accumulated result equals $\globalProv$ (Def.~\ref{def:global-prov}) precisely.
\end{proof}
}
}

\iftechreport{
\begin{proof}
We proceed by induction on the iteration index $j$, proving simultaneously (i) that the $\modelRel$ state computed by $\mlquery'$ at step $j$, restricted to the feature, weight, and iteration columns $(C,P,J)$ shared with $\mlquery$'s schema, equals $\modelRel_j$ as computed by $\mlquery$, and (ii) that the provenance materialized by $\mlquery'$ at step $j$ --- carried in the additional columns \lstinline{YP}, \lstinline{GI}, and \lstinline{Grad} introduced by~\Cref{alg:rewrite} --- equals $\localProv$ at $\modelRel_j$ per~\Cref{def:local-prov}.


\mypara{Base case ($j=0$)} Both $\mlquery$ (e.g., \Cref{fig:formalsemanticsquery}, line 2) and $\mlquery'$ (\Cref{fig:recqmlquery-bgdlr}, line 2) initialize the feature column, weight, and iteration counter identically, assigning $C$, $0.01$, and $0$, respectively.
By Step 1 of~\Cref{alg:rewrite} (\textsc{AugmentSchema}), the base case of $\mlquery'$ additionally initializes the provenance columns \lstinline{YP}, \lstinline{GI}, and \lstinline{Grad} to $0$, reflecting that no prediction or gradient computation has yet occurred at $j=0$.
Thus, although $\mlquery'$'s base relation carries a strictly larger schema than $\mlquery$'s, $\modelRel_0$ is syntactically and semantically identical in both queries when restricted to the shared columns $(C,P,J)$. \textsc{AugmentSchema} extends the base case with trivially initialized provenance columns but does not alter the values or semantics of the original feature, weight, and iteration columns. No provenance is claimed at $j=0$ since no gradient step has yet been taken, so (ii) holds.
Having established this schema correspondence, for the remainder of the proof we write $\modelRel_j$ for $\mlquery'$'s state at step $j$ to mean its projection onto $(C,P,J)$, suppressing the additional provenance columns except where part (ii) is directly concerned.

\mypara{Inductive hypothesis} Assume that at step $j-1$, $\modelRel_{j-1}$ computed by $\mlquery'$ equals $\modelRel_{j-1}$ computed by $\mlquery$, and that $\localProv$ at step $j-1$ (if $j-1 \geq 1$) was correctly captured by $\mlquery'$.

\mypara{Inductive step ($j-1 \to j$)} We show the recursive case of $\mlquery'$ (e.g., \Cref{fig:formalsemanticsquery}, lines 4--16) computes the same $\modelRel_j$ as the recursive case of $\mlquery$ (e.g., \Cref{fig:formalsemanticsquery}, lines 4--14), and captures $\localProv$ at $j$ correctly.

\begin{enumerate}
\item \emph{Join equivalence.} \texttt{recProv} (\Cref{fig:formalsemanticsquery}, lines 6--8) computes exactly $\modelRel_{j-1} \join_C \trainRel$, the join consumed by \texttt{Predict} in $\mlquery$ (\Cref{def:ml-query}). 
By Step 1 of~\Cref{alg:rewrite}, \modelprov\ is declared with 
\lstinline{YP}, \lstinline{GI}, \lstinline{Grad}, \lstinline{Agg}$_P$() = \lstinline{last()}, \lstinline{Agg}$_J$() = \lstinline{max()} (\Cref{fig:formalsemanticsquery}, line 1); under \syst's aggregate-in-recursion semantics (\Cref{sec:formal-sem}), any reference to \modelprov --- including within its own recursive case --- is exposed as a single collapsed tuple per feature $C$, obtained by applying these aggregates over all tuples derived for $C$ so far (a logical-view guarantee independent of physical retention; see~\Cref{sec:opt-global}). Since $J$ is a total order over fixpoint steps and each $C$ is updated once per iteration (\Cref{sec:provqml}), \lstinline{max()} and \lstinline{last()} jointly identify the unique tuple from iteration $j-1$, which by the inductive hypothesis is $\modelRel_{j-1}$. The join with $\trainv$ thus computes $\modelRel_{j-1} \join_C \trainRel$, matching $\aggregation(\modelRel_{j-1} \join \trainRel)$ in~\Cref{def:ml-query}.
\item \emph{Prediction equivalence.} \texttt{predProv} computes $YP$ (e.g., \Cref{fig:formalsemanticsquery}, lines 10--12). A window aggregate partitions its input into equivalence classes under $k$ and computes $\texttt{agg}(e)$ over each class independently, differing from the corresponding \texttt{GROUP BY k} aggregate only in that the result is broadcast back to every row of the class rather than collapsed to one row per class; since the same multiset of values within each class is aggregated either way, the two produce identical scalars. Applying this to the windowed sum over \texttt{recProv} shows it returns, for each tuple, the identical scalar that the sum()-annotated \texttt{Predict} relation of $\mlquery$ would compute when grouped by $(J,I)$ over the \emph{same} join input. Thus, the per-instance prediction $Pred_i = YP_i$ (\Cref{def:local-prov}) computed by $\mlquery'$ equals the one implicitly computed inside $\mlquery$'s \texttt{Predict} relation.
\item \emph{Gradient equivalence.} The final \texttt{SELECT} of the recursive case (e.g., \Cref{fig:formalsemanticsquery}, lines 10--12). A window aggregate computes $GI$ per tuple and $\grad$ 
By step 2, $YP$ here equals $\mlquery$'s prediction, so $GI$ equals $Grad_i$ of~\Cref{def:local-prov} exactly as \texttt{Gradient}'s inner select expression, e.g., \texttt{(p.YP - t.Y) * t.V} in $mlquery$ (\Cref{fig:formalsemanticsquery}, line 8). By the same window/group-by equivalence used in step 2, the windowed \lstinline{sum} over $(J,C)$ equals the sum()-annotated aggregate of \texttt{Gradient} in $\mlquery$ grouped by $(J,C)$. Thus, $\grad$ computed by $\mlquery'$ equals $\mathcal{G}$ computed by $\gradRel \leftarrow \aggregation \left(\predRel \join \trainRel \right)$ in $mlquery$.
\item \emph{Weight update equivalence.} Both queries update the weight via the identical arithmetic expression \texttt{P - lr * Grad / n} (resp. \texttt{m.P - lr * g.G / n}) and increment \texttt{J+1}; since the \texttt{Grad}/\texttt{G} operands were shown equal in step 3, and since \texttt{recProv} already carries forward the correct $m.P$ (step 1), the updated weight computed by $\mlquery'$ equals $\modelRel_j$ as computed by $\mlquery$'s outer recursive \modelprov\ relation ($\oplus$ in~\Cref{def:ml-query}).
\item \emph{Fixpoint-level validity of the transformation.} Steps 1--4 above show \emph{per-step} value equivalence; it remains to argue this equivalence is preserved \emph{under recursion}, i.e., that iterating the rewritten recursive case yields the same least fixed point as iterating the original mutually recursive one. This is exactly what PreM and PCC guarantee (\Cref{sec:prem-pcc}): PreM ensures the extrema aggregates \lstinline{Agg}$_P$(), \lstinline{Agg}$_J$() used to consolidate \modelr, \texttt{Predict}, \texttt{Gradient} into a single relation can be pushed inside the recursion without altering the fixpoint relative to the equivalent aggregate-stratified program; PCC ensures the non-extrema sum aggregates used in \texttt{predProv}/\texttt{Grad} are similarly safe to push inside the recursion because the cardinality of the delta relation $\delta\modelr_j$ (one tuple per feature $C$) is constant across iterations and pre-computable ahead of the fixpoint computation --- a condition satisfied by construction for the batch-gradient ML workloads of~\Cref{tab:queryconfig}, where the instance count $n$ and feature count $|C|$ are fixed. Together, PreM and PCC guarantee that the \emph{sequence} of states $\modelRel_0, \modelRel_1, \dots$ produced by $\mlquery'$'s single linear recursion coincides, at every $j$, with the sequence produced by $\mlquery$'s mutual/non-linear recursion, and that both reach the same fixpoint (or terminate at the same maximum iteration $k$).
\end{enumerate}

By steps 1--5 above and the inductive hypothesis, the model weights in $\modelRel_j$ computed by $\mlquery'$ equals the model weights in $\modelRel_j$ computed by $\mlquery$, establishing (i) for all $j \leq k$.

\mypara{(ii) Provenance soundness and completeness at step $j$} We now show the tuple $t_{L} = \langle t_i, \predInst, \\ \gradInst, \grad \rangle$ materialized by $\mlquery'$ for each $t_i \in \trainRel$ matches~\Cref{def:local-prov} exactly, with no omission or duplication.

\begin{itemize}
\item \emph{Soundness (no spurious tuples).} \texttt{recProv}, \texttt{predProv}, and the final gradient subquery are all \emph{non-recursive} subqueries defined purely in terms of $\modelprov $ (already shown correct by (i)) and $\trainv$; they introduce no new base tuples and no join beyond $\modelRel_{j-1} \join_C \trainRel$ (step 1 above). $\mlquery$ itself does not expose or materialize this dependency. Provenance capture is precisely the capability $\mlquery'$ adds. What step 1 and the argument above establish is only that the \emph{values} $\mlquery'$ materializes as provenance ($\predInst$, $\gradInst$, $\grad$) are exactly the intermediate values $\mlquery$'s \texttt{Predict} and \texttt{Gradient} relations already compute internally en route to the final weights.
\item \emph{Completeness (no missing tuples)} Since $\modelprov$ carries exactly one live tuple per feature $C$ at each $j$ and the join with $\trainv$ is on $C$, every $t_i \in \trainRel$ with a matching feature $C$ contributes exactly one row to \texttt{recProv}; since every training instance has a feature value for every relevant $C$ by construction of the verticalized encoding (\Cref{sec:formal-sem}), $|\texttt{recProv}| = |\trainRel|$ at each $j$, so every $t_i \in \trainRel$ is represented matching \Cref{def:local-prov}'s requirement that $\localProv$ be defined for \emph{every} $t_i \in \trainRel$.
\item \emph{Value correctness} By steps 2--3 above, the materialized $\predInst$, $\gradInst$, and $\grad$ columns are exactly the values $f(t_i.v_c, p_c)$, $g(\predInst, t_i)$, and $\sum_{t_i \in c} \gradInst$ required by \Cref{def:local-prov}'s augmentation function $\mathcal{A}$.
\end{itemize}
Thus, $\localProv$ at step $j$, as materialized by $\mlquery'$, equals $\localProv(\trainRel, \modelRel)$ of \Cref{def:local-prov} exactly.

\mypara{Global provenance} By \Cref{def:global-prov}, $\globalProv = \bigcup_{j=0}^{k} \localProv$. Since (ii) shows $\mlquery'$ materializes $\localProv$ correctly and completely at \emph{every} $j = 0, \dots, k$ (via the sequential subquery chain re-executed fresh at each fixpoint step, as required by the ``Challenges of the Extension'' discussion in \Cref{sec:provqml}), and \Cref{sec:opt-global} establishes that the external accumulation of these per-iteration snapshots into RDDs is lossless (each $\localProv$ snapshot is retained, indexed by $J$, until subsumed by the accumulated union --- eviction occurs only \emph{after} subsumption), the accumulated result equals $\bigcup_{j=0}^{k} \localProv = \globalProv$ exactly, matching \Cref{def:global-prov}.

\mypara{Conclusion} By induction, (i) holds for all $j \leq k$, so $\mlquery'$ returns the same optimized model weights as $\mlquery$; by (ii) and the global provenance argument, $\mlquery'$ additionally computes $\localProv$ at every iteration and $\globalProv$ over all iterations exactly as defined. 
Thus $\mlquery'$ is both semantics-preserving and provenance-complete with respect to $\localProv$ and $\globalProv$.
\end{proof}
}

\section{Optimization}
\label{sec:optimization}

\begin{figure}[t]
    \centering \vspace{-3mm}
    \includegraphics[width=0.75\linewidth, trim= 0 60 0 0]{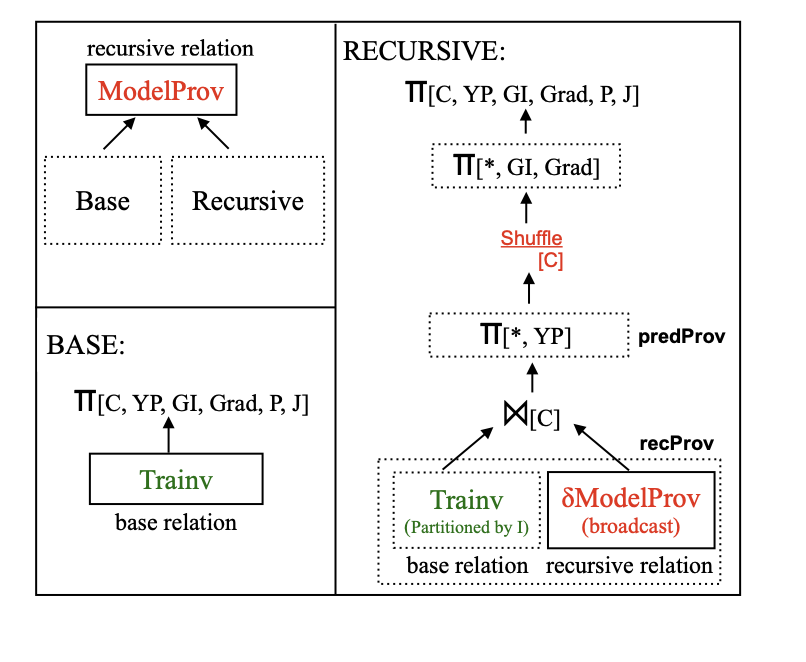}
     \vspace{-4mm}
    \caption{Optimized query plan for $\mlquery'$, 
    illustrating broadcast of $\delta\modelRel$, upfront 
    partitioning of $\trainRel$ by I, and 
    the single Shuffle on C per iteration.}
    \label{fig:opt-query-plan}
\end{figure}


Although the \syst's instrumentation algorithm (\Cref{sec:query-rewriting-semantics}) eliminates mutual and non-linear recursions, capturing $\localProv$ and $\globalProv$ during recursive evaluation introduces two distinct computational challenges. First, computing $\localProv$ at each iteration $j$ requires evaluating window functions over the relation \texttt{recProv}, which requires sorting and shuffling data that can dominate execution time. Second, maintaining $\globalProv$ across all iterations requires persisting $\localProv$ at every $j$, which causes intermediate result sizes to grow as $O(|\trainRel| \cdot k)$ where $k$ is the maximum number of iterations. 

\subsection{Optimizing Local Provenance Capture}
\label{sec:opt-local}

Evaluating $\mlquery'$ to capture $\localProv$ relies on window functions 
over the relation \texttt{recProv} to compute $\predInst$ and $\grad$ (e.g.,~\Cref{fig:recqmlquery-bgdlr}). In a distributed setting, window functions require data to be partitioned by a specific key before aggregation, which triggers a shuffle operation to redistribute tuples across worker nodes. 
Specifically, computing $\predInst$ requires partitioning by \lstinline{(J,I)} and computing $\grad$ requires partitioning by \lstinline{(J,C)}. If these shuffles are performed naively at every iteration $j$, the overhead accumulates linearly with the number of iterations, significantly degrading the performance.

We address this with two complementary techniques. First, since $\modelRel_j$ is much smaller than $\trainRel$ (it contains one tuple per feature dimension, while $\trainRel$ contains one tuple per feature per training instance), we broadcast $\modelRel_j$ to all worker nodes at each iteration. This allows \texttt{recProv} to be formed via a local join on each partition of $\trainRel$ without any shuffle of the training data. Second, we partition $\trainRel$ by the instance identifier \lstinline{I} once at the beginning of query evaluation and reuse this partition throughout training. Since \texttt{recProv} inherits the defined partition, it is already co-partitioned with the prediction window function (e.g., \lstinline{1/(1+exp(-(sum(V*P)) OVER (PARTITION BY} \lstinline{J,I)} in~\Cref{fig:recqmlquery-bgdlr}), eliminating its reshuffle. Note that since \lstinline{J} is uniform within each recursive step, the effective partitioning key is driven by \lstinline{I}, making the upfront partition on $\trainRel$ by \lstinline{I} sufficient. The gradient window function (e.g., \lstinline{sum((YP-Y)*V) OVER (PARTITION BY} \lstinline{J,C)} in the figure) requires a single reshuffle of \texttt{recProv} from \lstinline{I} to \lstinline{C} at each iteration $j$, which is unavoidable. Partitioning $\trainRel$ by \lstinline{C} instead would introduce two reshuffles per iteration, one for \texttt{predictProv} from \lstinline{C} to \lstinline{I}, and another back to \lstinline{C} for the gradient computation. Thus, partitioning by \lstinline{I} minimizes total shuffle overhead to exactly one reshuffle per iteration. 
\Cref{fig:opt-query-plan} shows the resulting optimized query plan. $\trainRel$ (\textcolor{darkgreen}{$\trainv$} partitioned by \lstinline{I}) and $\delta\modelRel$ (\textcolor{red}{$\modelprov$} broadcasted) form \texttt{recProv} via a local join with no shuffle. The single \textcolor{red}{\underline{Shuffle} [C]} between \texttt{predictProv} and \texttt{Grad} (e.g., lines 14--17 in~\Cref{fig:recqmlquery-bgdlr}) is the only inter-node in the plan redistributing data required per iteration.
Furthermore, since the size and structure of \texttt{recProv} for evaluation at each iteration remain fixed at $|\trainRel|$ tuples across all iterations (\Cref{sec:opt-global}), the sorted order within each partition produced by the gradient reshuffle can be cached and reused at every subsequent $j$, reducing the sort cost within a partition.

\enlargethispage{0.8\baselineskip}
This optimization is valid for the ML algorithms in \Cref{tab:queryconfig} because the join key between $\trainRel$ and $\modelRel_j$ is the feature dimension \lstinline{C}, 
and the aggregation keys for $\predInst$ and $\grad$ are \lstinline{I} and \lstinline{C}, regardless of the specific predict function $f$ or gradient function $g$. Furthermore, since $\trainRel$ is read-only throughout training, its partitions remain valid across all iterations, making upfront partitioning a sound and stable optimization. This mirrors the replica mechanism of~\cite{WW21}, which similarly avoids repeated shuffles of the base relation by co-partitioning it with the recursive relation on the required join keys.
Note that this optimization extends to any ML algorithm whose training loop joins a fixed base relation with a small delta model relation on a stable key and computes per-instance and per-feature aggregations on fixed partition keys. 
Algorithms with additional aggregation levels or multiple join keys would require a corresponding extension of the partitioning strategy.

\subsection{Optimizing Global Provenance Capture}
\label{sec:opt-global}
$\globalProv$ is defined as the union of $\localProv$ across all $k$ fixpoint iterations (\Cref{def:global-prov}). Since $\localProv$ at each iteration $j$ has size $O(|\trainRel|)$, maintaining $\globalProv$ naively requires $O(|\trainRel| \cdot k)$ space. For large training datasets and number of iterations, the intermediate result can far exceed the size of the final model output, placing significant memory and I/O overhead. A further overhead arises from the join in \texttt{recProv}. Without careful management, the join between $\trainRel$ and $\modelRel$ at each iteration $j$ could include model states from prior iterations that are still resident in memory, producing stale or redundant provenance that inflates $\globalProv$ unnecessarily.

To address this issue, we adopt a strategy analogous to the delta-based optimization of~\cite{WW21} through DSN (\Cref{sec:dsn}). At each iteration $j$, only the delta relation $\delta\modelRel_j$, i.e., the model parameters updated in the most recent fixpoint step, is used as the input to the join in \texttt{recProv}.
Consequently, \texttt{recProv} at each $j$ joins $\trainRel$ exclusively with $\modelRel_{j-1}$, producing exactly $|\trainRel|$ tuples with no contribution from the previous model states. This is correct because the join key between $\trainRel$ and $\modelRel$ includes the iteration counter \lstinline{J}, so tuples from earlier iterations carry mismatched \lstinline{J} values and would produce no valid join results, analogous to the observation in~\cite{WW21}. 
To maintain $\globalProv$, the $\localProv$ captured at each iteration $j$ is accumulated as a chain of RDDs, preserving the full optimization trajectory across all fixpoint iterations. To bound memory usage, the $\localProv$ RDD from iteration $j-1$ is unpersisted at the end of each fixpoint step once its content is subsumed by the accumulated $\globalProv$, while remaining accessible via Spark's DAG lineage on demand. 
At any point during the recursion evaluation, only the cached $\trainRel$ and the $\localProv$ snapshot from the latest iteration $j$ are physically resident in memory which keeps the memory usage bounded at $O(|\trainRel| + |\localProv|) = O(2 \cdot |\trainRel|)$ per iteration. 
Including older model states would produce no additional tuples, while retaining them in memory would incur unnecessary overhead. The incremental accumulation of $\globalProv$ as a union of per-iteration $\localProv$ snapshots is similarly justified. 
Since $\globalProv$ can be queried post-hoc for explanation or debugging 
rather than consumed during training, evicting $j-1$ RDDs after each fixpoint step introduces no correctness concern, as the accumulated $\globalProv$ already subsumes all prior $\localProv$ snapshots by \Cref{def:global-prov}.

\section{Experiments}
\label{sec:experiments}

We start by describing our experimental setup in~\Cref{sec:setup}\ifnottechreport{.} \iftechreport{followed by an end-to-end performance evaluation in~\Cref{sec:end-to-end}.}
We then investigate the impact of several performance factors in~\Cref{sec:perf}, and measure provenance capture overhead by comparing with Radlog~\cite{WW21} that evaluates the ML query $\mlquery$ in~\Cref{sec:prov-overhead}. Finally, in~\Cref{sec:discussion}, we provide a cost estimation for computing explanations like feature importance.


\subsection{Experimental Setup}
\label{sec:setup}

\mypara{Environment}
All queries were executed on a cluster with 7 worker nodes. Each node 
has 2 x 3.3Ghz Intel Xeon Gold 6234 CPUs
(16 cores)
and 192 GB memory.
The cluster hosts 35 executors of 32GB main memory and 4 CPU cores each.

\mypara{Datasets} We conduct experiments on four real-world datasets: Income ~\cite{ACSincomedataset21}, BikeSharing ~\cite{bikesharingdataset13}, SpamBase ~\cite{spambasedataset99} and WebSpam ~\cite{webspamdataset12}. 
Income is a widely used benchmark dataset with binary labels. 
BikeSharing records the number of bikes rented over a given time period based on weather and seasonal information, SpamBase labels emails as spam or non-spam, while
WebSpam, a LIBSVM ~\cite{libsvm} dataset, distinguishes between legitimate web pages and spam.
The datasets used in this experiment are summarized in ~\Cref{tab:Experiments Data} where \#Tuple means the number of training instances, while \#Feature means the number of dimensions in the dataset. 
Each data is verticalized in the form of $\langle I,Y,l,v \rangle$ explained in~\Cref{sec:formal-sem}.

\mypara{Queries} 
The ML query sets include linear, lasso, ridge, and logistic regression models plus SVM denoted as queries $\query_1$ through $\query_5$ in~\Cref{tab:queryconfig}.
The table provides an overview of the query configurations used for our experiments.
Across all five queries, 
the ML query structure differs in the predict $f$ and gradient $g$ functions.
For experiments with the BikeSharing dataset we use queries $\query_1$, $\query_2$, and $\query_3$ that express regression tasks and for Income, SpamBase and WebSpam, we use $\query_4$ and $\query_5$ that encode classification tasks.

\mypara{Parameters and Metrics}
In this experiment, we focus on the query execution time as the correctness of the optimized weights was validated by comparing them with standard implementations in scikit-learn.
Preprocessing time for data verticalization is excluded. 
Learning rate, \lstinline{lr}, is set to $0.01$, \lstinline{j} is set to 100 maximally, and \lstinline{n} depends on the number of tuples in each dataset (\Cref{tab:Experiments Data}).
%

\begin{table}
\caption{Experiment Dataset Statistics}
\vspace{-4mm}
\setlength{\tabcolsep}{8pt} 
\renewcommand{\arraystretch}{0.9}
\label{tab:Experiments Data}
\begin{tabular}{lll|lll}
\toprule
\multicolumn{3}{r|}{A. Verticalized Dataset} & \multicolumn{2}{c}{B. Original Dataset} \\
\cmidrule(r){1-3} \cmidrule(l){4-5}
Dataset & \#Tuple & \#Feature & \#Tuple & \#Feature \\ 
\midrule
Income & 17M & 10 & 1.7M & 10  \\
BikeSharing & 226K & 13 & 17K & 13 \\ 
SpamBase & 262K & 57 & 4.6K & 57  \\
WebSpam & 29M & 138 & 350K & 138 \\
\bottomrule
\end{tabular}
\end{table}

\begin{figure*}[t]
    \centering \vspace{-2mm}
    \begin{minipage}[t]{0.49\textwidth}  
        \vspace{0pt} 
        \centering
        \includegraphics[width=\linewidth,trim= 0 5 0 0,clip]{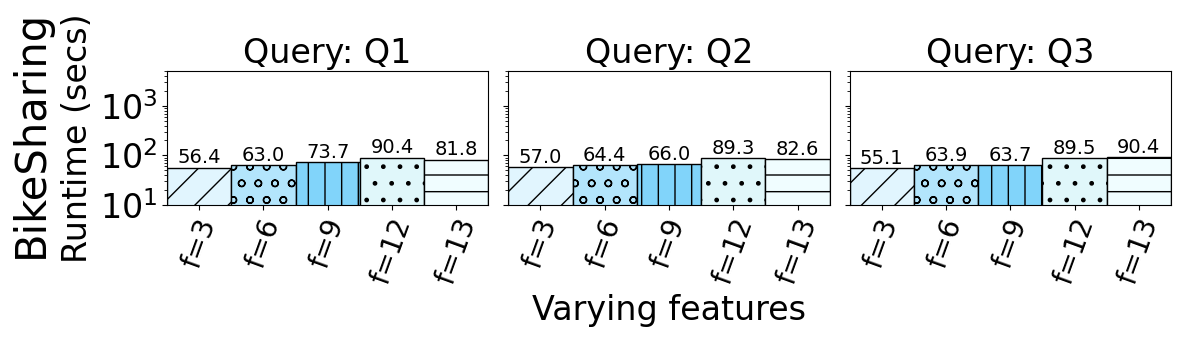}
        \includegraphics[width=\linewidth,trim= 0 6 0 0,clip]{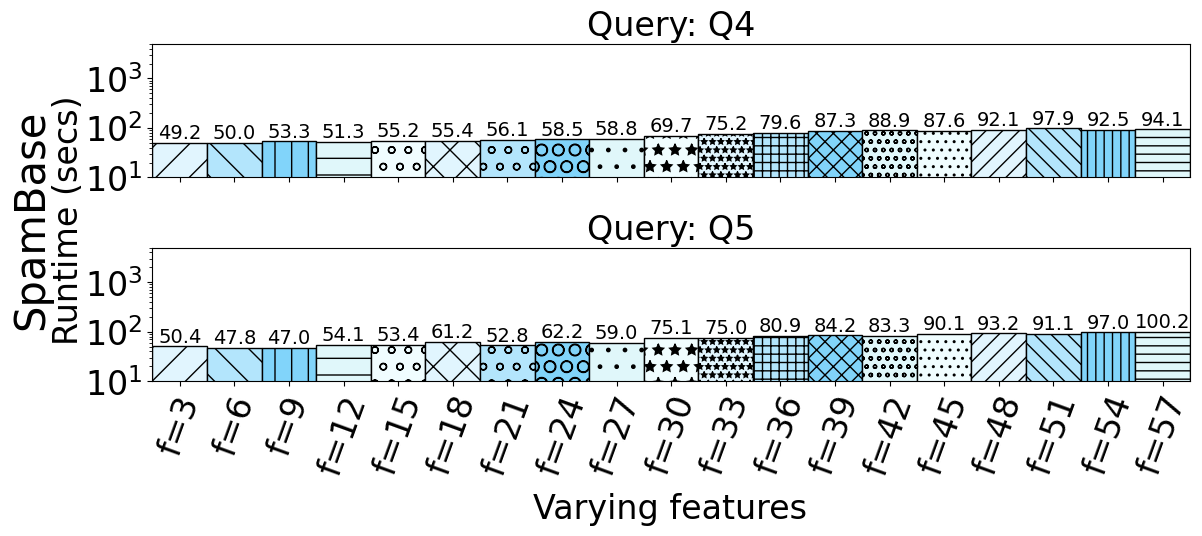}
        \includegraphics[width=\linewidth,trim= 0 5 0 0,clip]{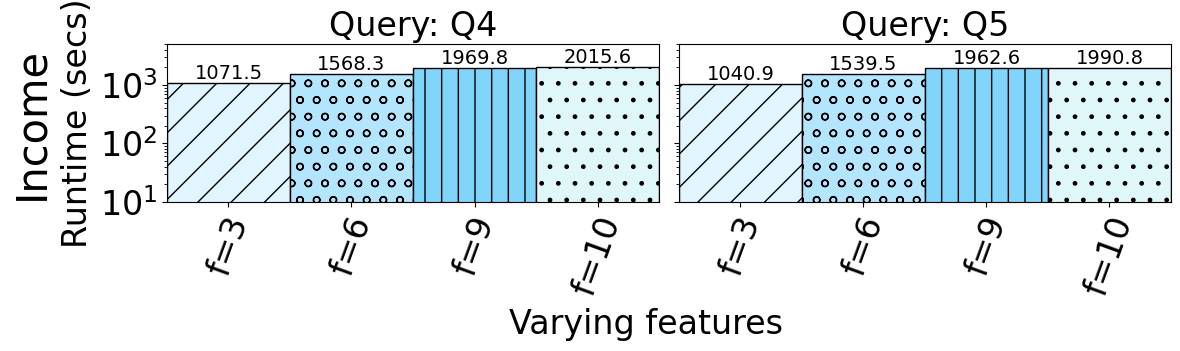}
        \vspace{-8mm}
        \caption{Runtime for varying number of features 
        } 
        \label{fig:varying_features}
    \end{minipage}
    \hfill
    \begin{minipage}[t]{0.5\textwidth}
        \vspace{0pt} 
        \centering
        \includegraphics[width=\linewidth,trim= 0 5 0 0]{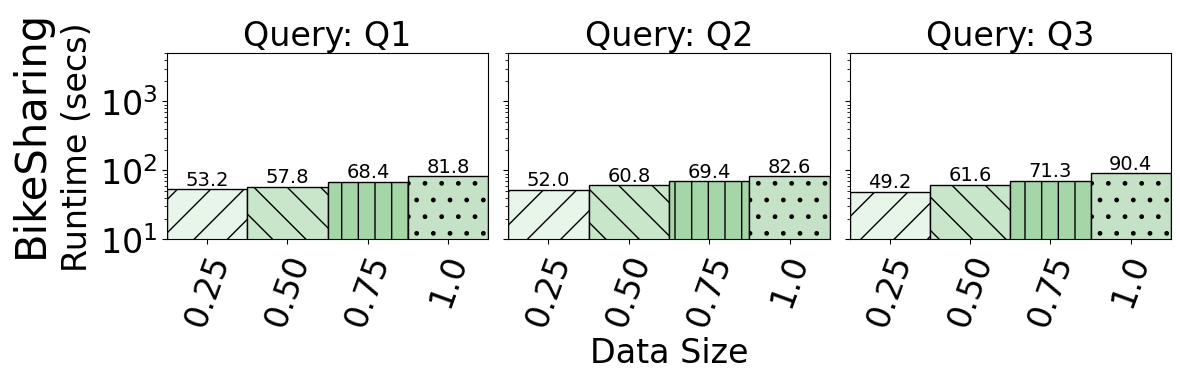}
        \includegraphics[width=\linewidth,trim= 0 5 0 0]{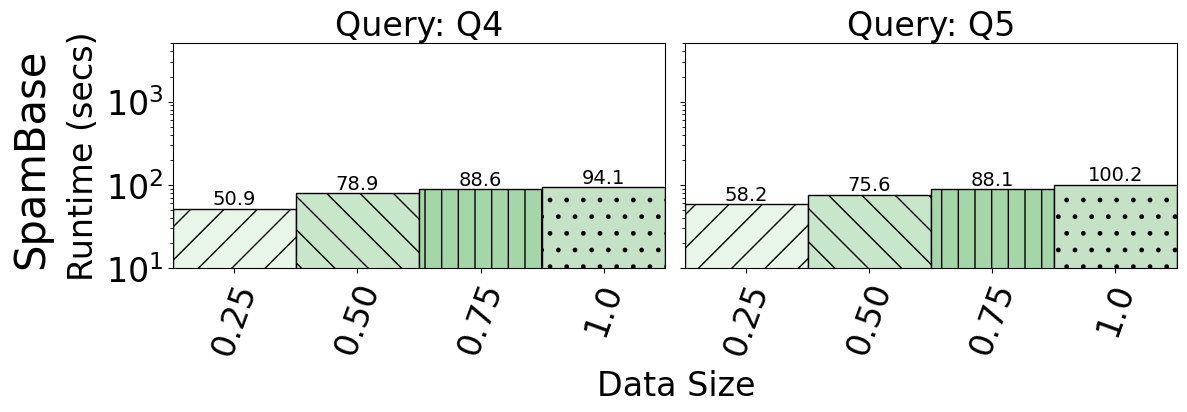}
        \includegraphics[width=\linewidth,trim= 0 5 0 0,]{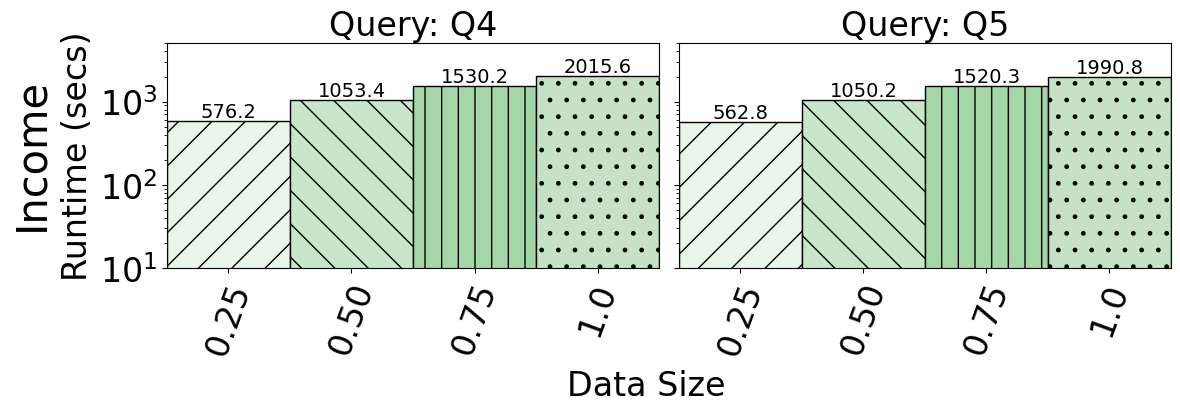}
        \vspace{-8mm}
        \caption{Runtime for varying input data sizes
        } 
        \label{fig:varying_datasize}
    \end{minipage}
\vspace{-2mm}
\end{figure*}

\iftechreport{
\begin{figure}[t]
\centerline{\includegraphics[width=0.75\columnwidth, trim= 0 30 0 20]{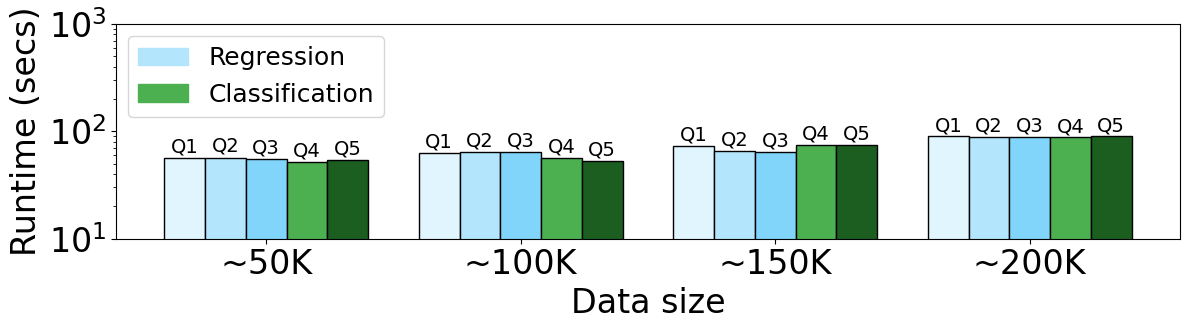}}
\Description{bar chart showing performance comparison, training across model complexities}
\vspace{-2mm}
\caption{End-to-end execution time for $\syst$ queries with regression workload ($\query_1$-$\query_3$) on BikeSharing and classification workload ($\query_4$-$\query_5$) on SpamBase}
\label{fig:modelcomplexities}
\end{figure}

\subsection{End-to-End Performance}
\label{sec:end-to-end}
We evaluate the performance of the queries 
over dataset sizes ranging from $\sim$50K to $\sim$200K tuples.
\Cref{fig:modelcomplexities} shows end-to-end execution time across all queries. Notably, runtime increases consistently sublinearly with dataset size. Regression queries ($\query_1$ - $\query_3$) show closely similar performance, which indicates that regularization has minimal impact. A similar pattern holds for classification queries ($\query_4$ - $\query_5$). 
Overall, the results show that 
\syst's provenance capture queries for various ML models do not introduce disproportionate overhead as data scales. 
}

\subsection{Performance Impacting Factors}
\label{sec:perf}
\iftechreport{
To better understand
the behavior observed in Section 7.2, we investigate}
\ifnottechreport{We first investigate}
the runtime of each $\syst$ query w.r.t. feature dimensionality, dataset size, the number of iterations \lstinline{j}, 
\ifnottechreport{\textcolor{MR}{and the number of worker nodes.}}
\iftechreport{the number of worker nodes, and the impact of shuffle optimization.}
Note that while provenance capture over smaller datasets such as BikeSharing and SpamBase fit within a single machine’s capacity, the single-machine experiments on Income (17M tuples) and WebSpam (29M tuples) did not complete due to out-of-memory, confirming that distributed execution is necessary at this scale. The following experiments are therefore conducted in the
distributed environment described in~\Cref{sec:setup}.

\setlength{\tabcolsep}{6pt} 
\renewcommand{\arraystretch}{0.9} 
\begin{table}[t]
\centering
\caption{Feature-Subset Data Statistics}
\vspace{-4mm}
\begin{tabular}{|c|c|c|c|c|c|}
\hline
\multicolumn{2}{|c|}{\textbf{BikeSharing}} & \multicolumn{2}{c|}{\textbf{SpamBase}} & 
\multicolumn{2}{c|}{\textbf{Income}} \\
\hline
\#Feature & \#Tuple & \#Feature & \#Tuple  & \#Feature & \#Tuple \\
\hline
3 & $\sim$52K  & 12 & $\sim$55K & 3 & $\sim$5M \\
6 & $\sim$104K & 21 & $\sim$97K & 6 & $\sim$10M \\
9 & $\sim$156K & 33 & $\sim$152K & 9 & $\sim$15M \\
12 & $\sim$209K & 45 & $\sim$207K & 10 & $\sim$17M \\
13 & $\sim$226K & 51 & $\sim$235K & - & - \\
- & - & 57 & $\sim$262K & - & - \\
\hline
\end{tabular}
\label{tab:varying-features-dataset-sizes}
\end{table}
%
\mypara{Varying Number of Features}
We evaluate the impact of varying number of features on $\syst$ query runtime. 
The number of features $|f|$ is varied for the BikeSharing dataset $3 \le |f| \le 13$, SpamBase $3 \le |f| \le 57$, and Income $3 \le |f| \le 10$, while j = 100.
The corresponding dataset for each $f$ is created by selecting the corresponding subset of the verticalized training data.
\Cref{tab:varying-features-dataset-sizes} summarizes the data size for each feature subset across the datasets. 
\Cref{fig:varying_features} shows the result. The runtime increases as the number of features increases, and this holds across all queries. This is because higher dimensionality increases the size of the verticalized data, and thus, the size of provenance. 
Consequently, the cost of computing per-instance predictions and gradients grows.
For smaller feature sets, runtime remains relatively stable within each dataset, as lower dimensionality reduces shuffle and node communication overhead.
In general, the runtime over Income data is higher than the one over BikeSharing.
For Income, runtime at $f=9$, is nearly twice that at $f=3$, whereas for BikeSharing, the increase from $f=3$ to $f=9$ is only a few seconds. This difference reflects the varying data scale per feature, as Income at $f=3$ has 5M tuples and $f=9$ has 15M tuples, compared to 52K and 156K tuples for BikeSharing.

\mypara{Varying Dataset Size}
Next, we vary input data size  
from 25\% (0.25) to 100\% (1.0) of the complete dataset, proportionally sampling training instances across all features to preserve the feature distribution at each scale. 
The number of iterations is fixed at 100.
\Cref{fig:varying_datasize} shows the results.
Runtime increases with data size across all queries and datasets since larger inputs increase both the shuffle volume during recursive evaluation and the amount of provenance tracked per iteration.
For the smaller datasets such as BikeSharing and SpamBase, runtime grows sub-linearly while, for the larger dataset like Income, runtime scales roughly linearly. 
This behavior reflects the increased cost of scanning larger relations and aggregating more tuples per iteration. 
Overall, the results demonstrate that $\syst$ efficiently leverages data partitioning and parallelism to keep provenance capture tractable at scale.

\mypara{Varying Number of Iterations}
We vary the number of iterations \lstinline{j} from 5 to 100 and measure runtime for the queries $\query_1$, $\query_3$, and $\query_4$. 
\Cref{fig:iterations-across-all-queries} shows the results. Runtime increases with \lstinline{j} across all queries and datasets since each iteration adds an additional cost of recursive computation and provenance tracking.
The runtime is sub-linear, which reflects the fixed cost per-iteration of joining $\trainRel$ with $\delta\modelRel$ and the amortized effect of upfront partitioning. 

\mypara{Impact of Large Number of Features}
We conducted a stress test for $\syst$ using $\query_4$ and WebSpam dataset which has the largest number of features.
The runtime measured by varying the number of iterations and dataset size is shown in~\Cref{fig:iterations-and_size_webspam_Q4}.
\syst\ successfully executes the query with j = 100 over 7M, but the maximum number of iterations that can be completed decreases as dataset size grows, reflecting the increased memory pressure at larger scales.
This is also visible in the result of varying data size. 
Only runtime over 25\% data reaches j = 100, while larger dataset sizes do not complete the full 100 iterations.

\noindent \textcolor{MR}{
\mypara{Impact of Number of Workers}
This experiment evaluates the impact of varying the number of worker nodes from 2 to 7 using $\query_4$ on the Income dataset. \Cref{fig:varying-worker-nodes} reports the measured runtimes for varying numbers of features (left) and varying data sizes (right).
The result on the left confirms that fewer worker nodes (e.g., 2$\sim$3) can process smaller workloads ($\sim$6 features), but fail to complete larger workloads ($\sim$10 features) due to insufficient memory capacity. A similar trend is observed in the result on the right. Smaller data sizes ($\sim$8M tuples equivalent to 50\% of the dataset) are successfully processed with fewer nodes, while larger data sizes (over 75\% of the dataset) require more worker nodes to complete successfully.
}

\noindent
\textcolor{MR}{
\mypara{Impact of Shuffle Optimization}
This experiment evaluates the impact of reduced shuffle optimization of $\syst$ using $\query_1$ 
\iftechreport{and $\query_4$} on the BikeSharing 
\ifnottechreport{dataset.}
\iftechreport{and SpamBase datasets respectively.} \Cref{fig:with-and-without-shuffle-bike} shows the measured runtime with and without the shuffle optimization, for varying numbers of features (left) and data sizes (right).
Overall, the result shows that execution without the shuffle optimization incurs higher runtime, with an overhead of up to $3.4\times$ compared to the optimized execution.
}

%
%
\begin{center}
\begin{figure}[t]
    \vspace{-3mm}
    \centering
    \subfigure{%
        \includegraphics[width=0.75\textwidth, trim= 0 30 0 0]{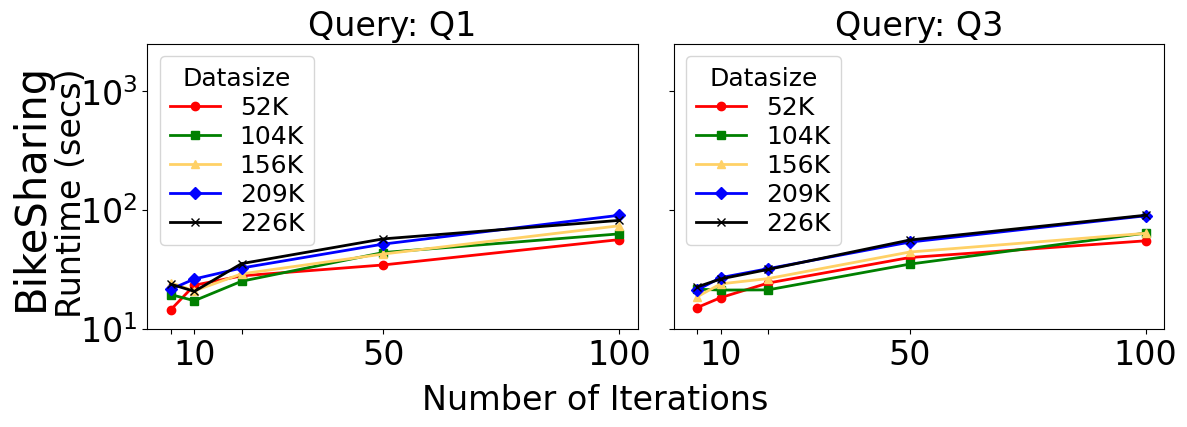}
        \label{fig:itrbike}
    }
    \subfigure{%
        \includegraphics[width=0.75\textwidth, trim= 0 30 0 0]{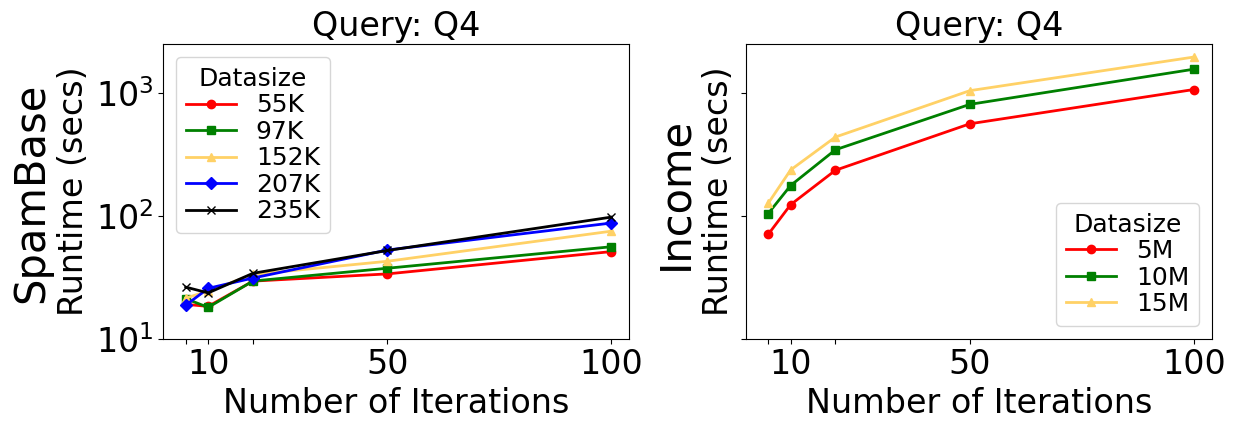}
        \label{fig:itrbike}
    }
    \vspace{-2mm}
    \caption{Runtime varying the number of iterations}
    \label{fig:iterations-across-all-queries}
\end{figure}
\end{center}
\begin{figure}[t]
\vspace{-2mm}
{\includegraphics[width=0.75\linewidth, trim= 0 20 0 0]{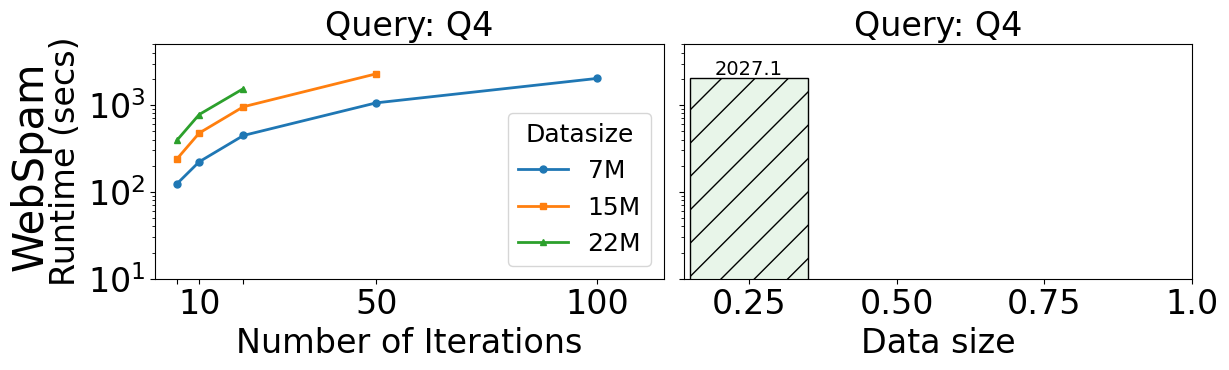}}
\vspace{-2mm}
\caption{Runtime for large number of features
}
\label{fig:iterations-and_size_webspam_Q4}
\end{figure}
%
\begin{figure}[t]
\vspace{-2mm}
{\includegraphics[width=0.75\linewidth, trim= 0 25 0 0]{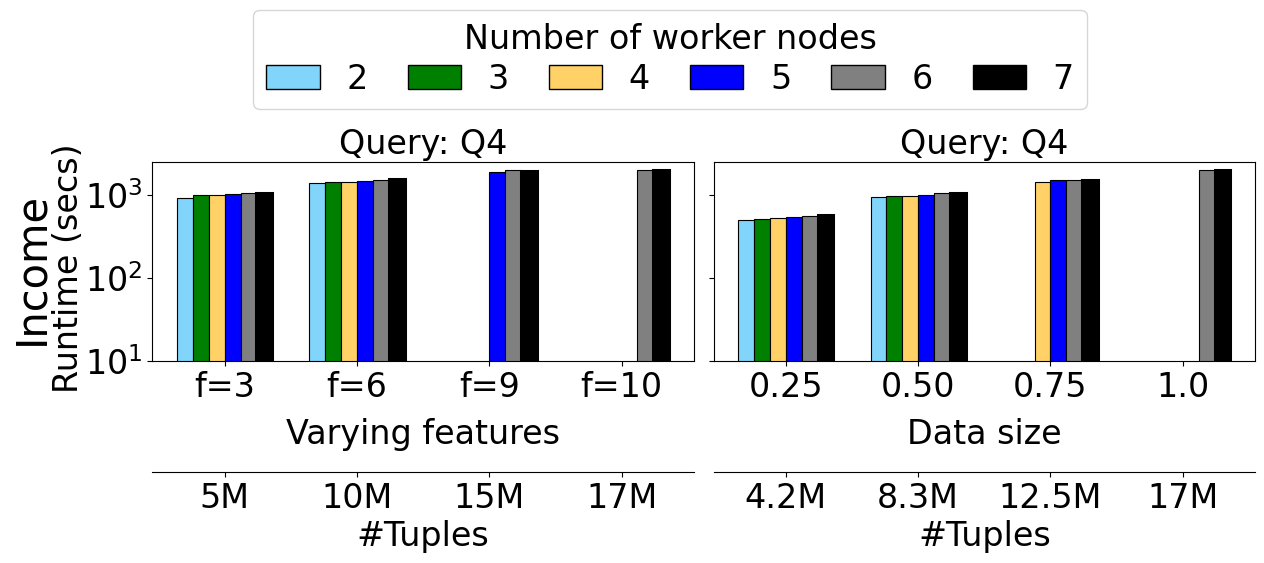}}
\vspace{-1mm}
\caption{Runtime for varying number of worker nodes}
\label{fig:varying-worker-nodes}
\end{figure}
%
%
\begin{figure}[t]
\ifnottechreport{
\centerline{\includegraphics[width=0.75\columnwidth, trim= 0 0 0 0]{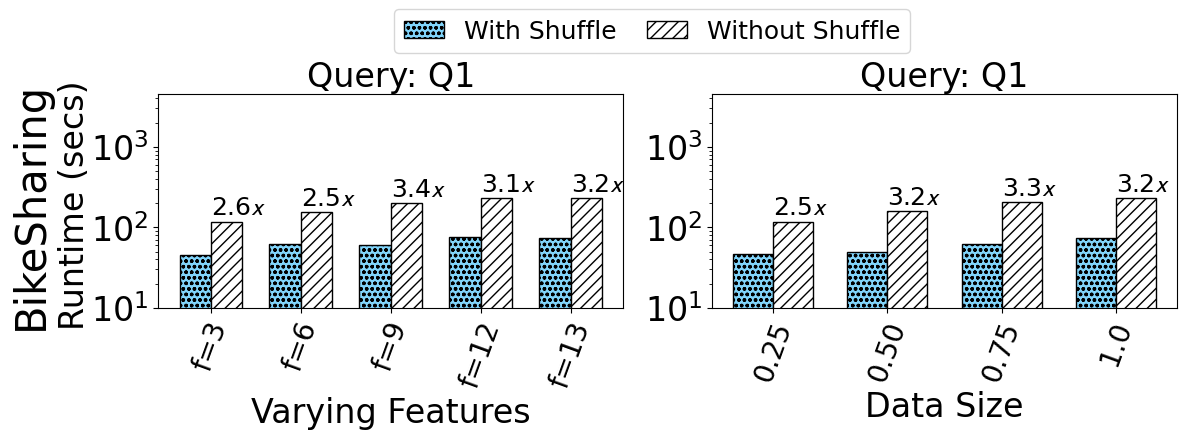}}}
\iftechreport{
\centerline{\includegraphics[width=0.75\columnwidth, trim= 0 0 0 0]{img/with-and-without-shuffle-bike.png}}
\centerline{\includegraphics[width=0.75\columnwidth, trim= 0 0 0 0]{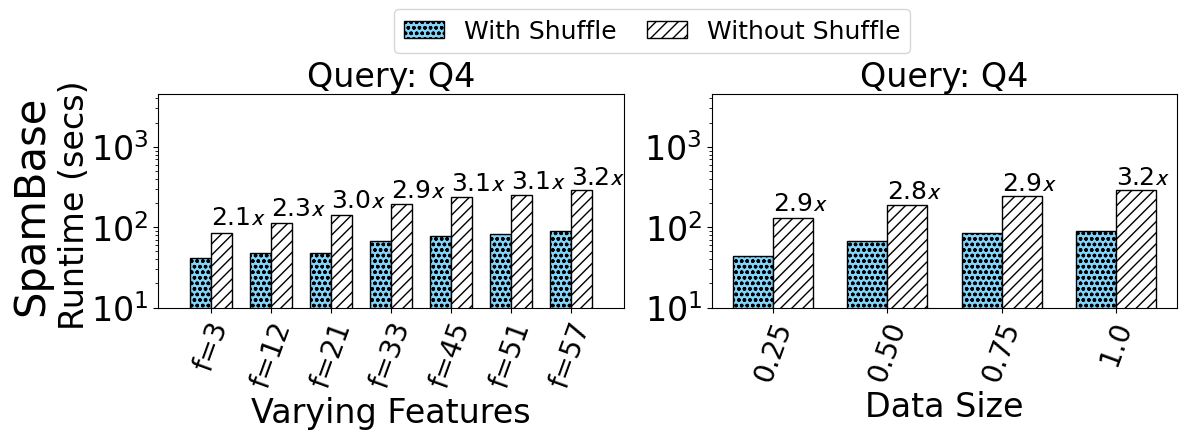}}}


\vspace{-4mm}
\caption{\textcolor{R1}{Runtime for impact of shuffle optimization}}
\label{fig:with-and-without-shuffle-bike}
\end{figure}


\begin{figure}[t]
\vspace{-2mm}

{\includegraphics[width=0.75\linewidth, trim= 0 0 0 0] {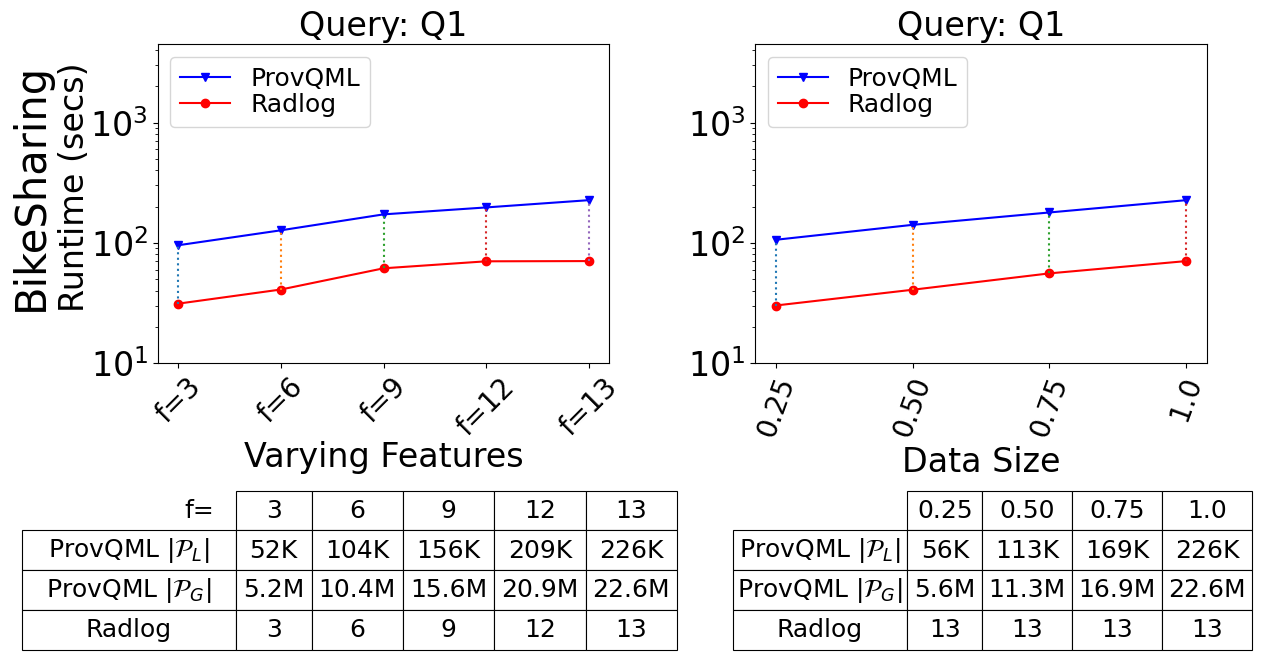}} 
{\includegraphics[width=0.75\linewidth, trim= 0 0 0 0] 
{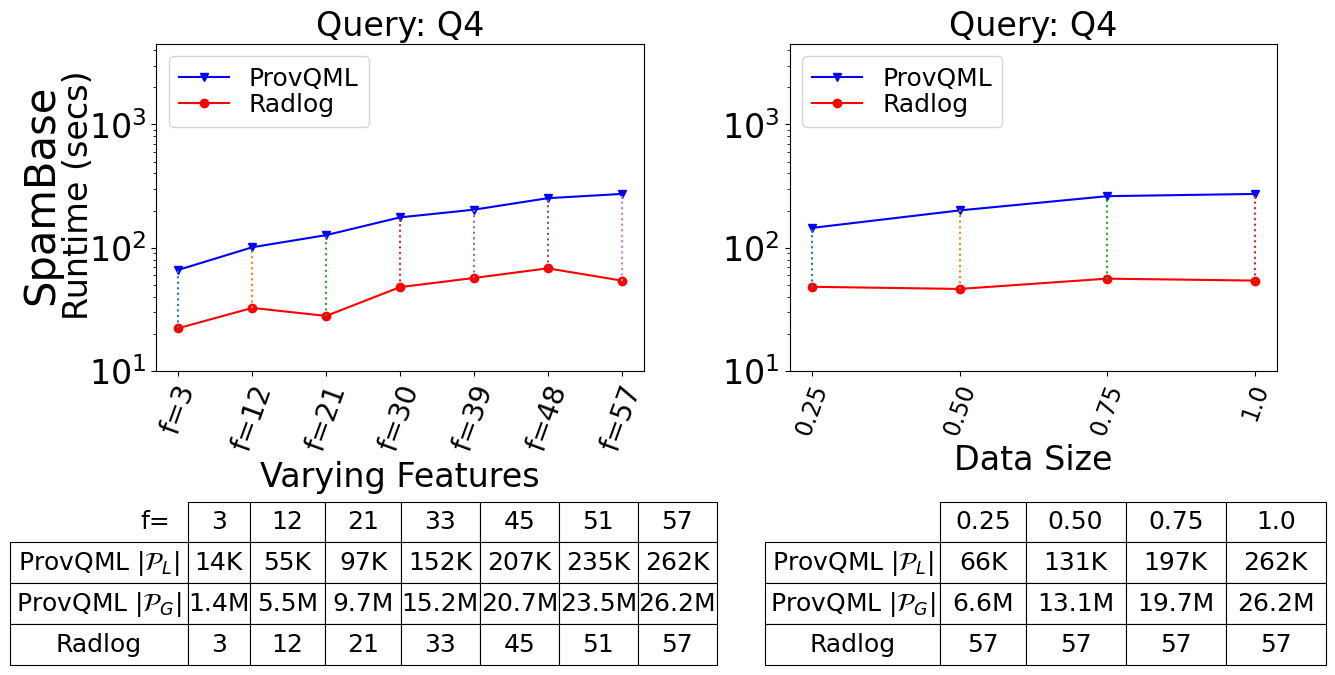}}
\vspace{-3mm}


\caption{\textcolor{R3}{Overhead of provenance for $\query_1$ and $\query_4$ across varying number of features and dataset sizes} 
}
\label{fig:provenance-cost-provqml-radlog}
\end{figure}

\vspace{-2em}
\subsection{Evaluating Provenance Overhead}
\label{sec:prov-overhead}
We measure the provenance computation overhead by comparing the runtime of $\syst$ with the runtime of Radlog ~\cite{WW21}.
Radlog is a recursive aggregate Datalog query engine for ML algorithms built on top of Apache Spark and computes the optimized model's parameters (executing the Datalog query equivalent to the SQL in~\Cref{fig:formalsemanticsquery}). 
As demonstrated in~\cite{WW21}, the system achieves the best end-to-end execution time for the ML models in~\Cref{tab:queryconfig} compared to MLlib~\cite{mllib16} and SystemML~\cite{systemML11}.
Thus, we only compare with it.
As Radlog does not capture provenance for $\mlquery$, 
we take Radlog as a baseline to measure the overhead of capturing provenance.
Due to issues in configuring Radlog in our cluster (\Cref{sec:setup}) (lack of resources to resolve them), where all other experiments were conducted, this evaluation was instead performed on a machine with 2.9 GHz quad-core Intel CPU and 16 GB memory.
\Cref{fig:provenance-cost-provqml-radlog} shows the runtime of $\query_1$ (linear regression) on the BikeSharing dataset and $\query_4$ (logistic regression) on the SpamBase dataset for both varying the number of features, $3 \le |f| \le 13$ for $\query_1$ and $3 \le |f| \le 57$ for $\query_4$,
and dataset sizes (25\% - 100\%).
Provenance introduces a consistent considerable overhead. 
For $\query_4$ varying $|f|$, the provenance capture overhead is at most $5\times$ slower than evaluating the model weights alone, occurring at the full dataset with $|f| = 57$ where \syst\ took 272.80 seconds compared to 54 seconds for Radlog. A similar trend is observed when varying the dataset size, where the provenance overhead reaches
$4\times$. The same trend holds for $\query_1$.
\textcolor{MR}{The provenance capture overhead observed aligns with overheads reported in the data provenance literature for comparably expressive query classes ~\cite{DM22, MDG18}.} 
The tables shown in \Cref{fig:provenance-cost-provqml-radlog} present the result size of Radlog and the size of provenance captured by $\syst$.
The size of local provenance $\localProv$ is bounded by $|\trainRel|$, because each output tuple in $\modelRel_j$ (the model parameter per feature dimension at $j$th iteration) is mapped to each tuple in $\trainRel$ (a tuple per feature per training instance). 
Thus, the size of global provenance $|\globalProv| = |\localProv| * k$ across $k$ iterations.
In contrast, the output size of $\mlquery$ is $|\modelRel_k|$, containing the optimized model parameters.
For \syst\, the reported size corresponds to 
$|\localProv|$ for a single iteration, while the overall provenance size $|\globalProv| = |\localProv|$ * 100 as the evaluation is conducted with 100 iterations. For example, the provenance size for $\query_1$ on the BikeSharing dataset for $|f|=3$ is 5.2M = 52K $\cdot$ 100 tuples.


\subsection{Discussion}
\label{sec:discussion}
One of the widely used methods for computing model explanation is marginal feature contribution, formalized by Shapley value (SV)~\cite{Shapley}. SV attributes the contribution of each feature to a model output as the weighted average of its marginal contributions across all possible feature coalitions.
The computation introduces substantial overhead as SV requires an exhaustive evaluation of all possible feature coalitions, where each coalition is a model. 
%
Despite this expensive cost, we observe the feasibility of computing such explanations by extending \syst\ based on the prior work~\cite{TA23,AL25}.
To test, we encoded all possible coalitions of features as data, which we refer to as \emph{explanation data}.
For that, we augmented the input dataset with an additional attribute 
which encodes feature coalitions. 
This schema design provides two benefits: (i) it provides support for evaluations over all feature coalitions without requiring changes to the ML query and (ii) it shifts complexity that would otherwise arise at the query level to the data level. 
We implemented this extension in \syst\ by augmenting the recursive relation schema with an additional coalition identifier column, e.g., extending the \modelprov\ relation in Figure~\ref{fig:recqmlquery-bgdlr} to \lstinline{ModelProv(C, CF, YP, GI, Grad, last() AS P, max() AS J)}, where \texttt{C} and \texttt{CF} partition the training data by feature coalition and allow the existing query structure to evaluate all coalitions in a single recursive execution.

Interestingly, our investigation extracted that the computational overhead may not be completely unmanageable.
We run our experiments on the BikeSharing and SpamBase datasets using queries 
$\query_1$ and $\query_4$, varying the number of features. 
Note that, with the naive extension, the evaluation could be conducted for the small number of features.
The result is shown in~\Cref{fig:graph-and-table-cf-cost-spambase}.
The feature size and the corresponding data size are shown on the x-axis.
The experimental results highlight several observations. 
First, runtime scales roughly linearly with the number of feature coalitions. 
Secondly, the per-iteration cost increases more drastically as the intermediate result size is larger than the one for a single model.
Note that the runtime of this evaluation includes provenance capture time for every feature coalition.
A promising solution is relying on sampling that generates a representative subset of explanation data to compute explanations similar to the existing approximation-based methods in ML explanations~\cite{SR25}.
Without approximation, we envision that the computational overhead may be further reduced by leveraging the captured provenance across coalitions to avoid redundant recomputation, and by pruning coalitions that are unlikely to contribute significantly based on properties of the model or gradient structure.
This broadly captured information will enable new types of explanations, e.g., quantifying the influence of a group of training instances and a subset of features on the model's parameters.
Further investigation of such explanations is an interesting avenue for future work.

\begin{figure}[t]
\centering \vspace{-3mm}
{\includegraphics[width=0.75\columnwidth, trim= 0 45 0 0]{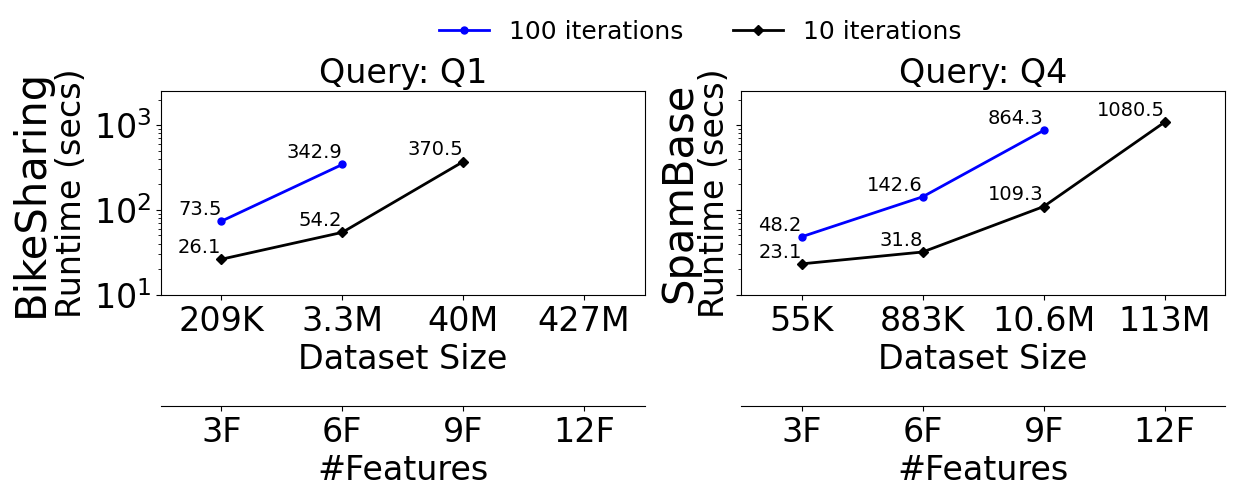}}
\vspace{1mm}

\caption{Runtime for evaluating all possible models
}
\label{fig:graph-and-table-cf-cost-spambase}
\end{figure}

\section{Related Work}
\label{sec:relatedwork}
\mypara{Recursive Query Processing}
Processing recursive queries in SQL~\cite{sql4ml19,indatabase21} and Datalog~\cite{SY16,WW21,mlog17} has been extensively explored. Support for aggregates within recursive queries was initially formalized in~\cite{KS1991}. 
Subsequent work showed that aggregates in recursion~\cite{SY16,SBKK2017,sql4ml19,WW21} increase the expressive power of recursive queries and provide formal declarative semantics for correctness, enabling ML workloads to be expressed in familiar declarative languages. 
However, none of these systems support provenance, which provides the information needed to explain ML model behavior.

\mypara{Database Provenance}
Provenance for database queries has received significant research attention over the past decade~\cite{BK01,green2007provenance,huang2008provenance,glavic2009perm,karvounarakis2010queryingDP,IS15,PUG18j,LL20}, including provenance for aggregate queries~\cite{AD11,AL23}. Provenance for recursive queries is understudied~\cite{DM22}.
Outside the database context, \cite{padovani2025ProvTracking} proposes a JSON-based library to capture provenance in ML systems, \cite{SBKK2017} tracks provenance for training runs and predictions, and~\cite{mlflow2prov23} captures provenance for ML tasks integrated with Git. 
However, these systems operate outside the database setting and require additional tools. In contrast, $\syst$ is a SQL-based system that leverages query rewriting to capture fine-grained provenance for ML workloads expressed as recursive aggregate SQL queries, without requiring additional integration.

\mypara{In-Database Machine Learning}
In-database machine learning integrates ML pipelines directly within database systems enabling tasks such as data preprocessing, model training, and validation. 
Systems such as MADlib~\cite{madlib12}, BigQuery ML~\cite{bigqueryml}, and Umbra~\cite{indatabase21} support in-database ML via SQL interfaces, but implement ML algorithms as user-defined functions or system-internal extensions rather than as recursive aggregate SQL queries. 
As a result, they neither expose the recursive structure of the ML computation nor support provenance capture during training.

\mypara{Distributed Machine Learning Systems}
Apache Spark ~\cite{apachespark12} has gained significant attention as a general platform for large-scale analytics and machine learning. 
SystemML~\cite{systemML11} supports ML workloads on Spark using R or Python, while BigDatalog~\cite{SY16} and Radlog~\cite{WW21} support declarative large-scale iterative computations on Spark using Datalog. 
\section{Conclusions}
\label{sec:conclusion-future-work}
In this paper, we presented $\syst$, a provenance-aware recursive aggregate SQL engine built on Apache Spark that efficiently captures fine-grained provenance while evaluating ML algorithms as recursive aggregate queries. Our approach instruments complex ML queries into queries with standard linear recursion, using window functions and a chain of provenance subqueries. 
We experimentally demonstrate that $\syst$ efficiently captures provenance for a range of ML models across widely used benchmark datasets.
Interesting directions for future work include extending $\syst$ to 
\textcolor{R4}{
(i) identify the influence of groups of instances, e.g., subpopulations sharing a feature value or label, by applying provenance summarization techniques~\cite{LL20e,LL20,AL23} over $\globalProv$;
} 
(ii) support deeper recursive dependencies in more complex ML algorithms such as neural networks, and (iii) compute new types of explanations using captured provenance and explanation data to quantify marginal feature contributions.

\enlargethispage{\baselineskip}



\bibliographystyle{ACM-Reference-Format}
\bibliography{sample}

\end{document}